\spnewtheorem{customlem}{Lemma}{\bfseries}{\rmfamily}
\spnewtheorem{redrule}{Reduction}{\bfseries}{\rmfamily}
\spnewtheorem{algoss}{Algorithm}{\bfseries}{\rmfamily}
\spnewtheorem{observation}{Observation}{\bfseries}{\rmfamily}
\DeclareMathOperator{\operatorClassNP}{NP}
\newcommand{\classNP}{\ensuremath{\operatorClassNP}}
\DeclareMathOperator{\operatorClassCoNP}{coNP}
\newcommand{\classCoNP}{\ensuremath{\operatorClassCoNP}}
\newcommand{\G}{\mathcal{G}}
\newcommand{\Q}{\mathcal{Q}}
\newcommand{\C}{\mathcal{C}}
\newcommand{\N}{\mathbb{N}}
\newcommand{\bigO}{\mathcal{O}}
\newcommand{\cO}{\mathcal{O}}
\renewcommand{\phi}{\varphi}
\newcommand{\yes}{\textit{\text{yes}}}
\newcommand{\VCMF}{\textsc{Vertex Cover/Mock forest modulator}}
\newcommand{\VCMFshort}{\textsc{VC/MFM}}
\newcommand{\VCPF}{\textsc{Vertex Cover/Pseudoforest modulator}}
\newcommand{\VCPFshort}{\textsc{VC/PFM}}
\newcommand{\ISMF}{\textsc{Independent Set/Mock forest modulator}}
\newcommand{\ISMFshort}{\textsc{IS/MFM}}
\newcommand{\ISPF}{\textsc{Independent Set/Pseudoforest modulator}}
\newcommand{\ISPFshort}{\textsc{IS/PFM}}
\newcommand{\ISFVS}{\textsc{Independent Set/Feedback Vertex Set}}
\newcommand{\ISFVSshort}{\textsc{IS/FVS}}
\newcommand{\VCproblem}{\textsc{Vertex Cover}}
\newcommand{\ISproblem}{\textsc{Independent Set}}
\newcommand{\FVSproblem}{\textsc{Feedback Vertex Set}}
\newcommand{\CNF}{\textsc{CNF-SAT}}
\newcommand{\true}{\textsc{True}}
\newcommand{\false}{\textsc{False}}
\newcommand{\NPincoNPpoly}{$\text{NP}\subseteq \text{coNP/poly}$}
\newcommand{\CONF}{\textsc{Conf}}
\newcommand{\ie}{i.\,e.\@\xspace}
\newcommand{\eg}{e.\,g.\@\xspace}
\newcommand{\no}{\textit{\textbf{no}}}
\newcommand{\defparproblem}[4]{
  \vspace{1mm}
\noindent\fbox{
  \begin{minipage}{0.96\textwidth}
    \begin{tabular*}{\textwidth}{@{\extracolsep{\fill}}lr} #1 %&
   %   {\textbf{Parameter:}} #3
    \\ 
      \end{tabular*}
    {\textbf{Input:}} #2  \\
    {\textbf{Parameter:}} #3  \\
    {\textbf{Question:}} #4
  \end{minipage}
  }
  \vspace{1mm}
}
\begin{document}
%\title{On the Computational Complexity of Vertex Integrity\\and
%  Component Order Connectivity\thanks{The research leading to these results has
%      received funding from the Research Council of Norway, Bergen Research
%      Foundation under the project Beating Hardness by Preprocessing
%and the European Research Council
%      under the European Union's Seventh Framework Programme (FP/2007-2013) /
%      ERC Grant Agreement n.\ 267959.}}

% TITLE STUFF
\title{Vertex Cover Structural Parameterization Revisited\thanks{Supported by Rigorous Theory of Preprocessing, ERC
    Advanced Investigator Grant 267959.}}
	
\author{Fedor V. Fomin \and Torstein J. F. Str{\o}mme}

\institute{
  Department of Informatics, University of Bergen, Norway,\\
  \texttt{\{fedor.fomin,torstein.stromme\}@ii.uib.no}
}

\pagestyle{plain}
\maketitle

% TITLE STUFF
\title{Vertex Cover Structural Parameterization Revisited\footnote{Supported by Rigorous Theory of Preprocessing, ERC
    Advanced Investigator Grant 267959.}}

\authorrunning{F.\ V.\ Fomin and T.\ J.\ F.\ Str{\o}mme}
% mandatory. First: Use abbreviated first/middle names. 
% Second (only in severe cases): Use first author plus 'et. al.'

%\maketitle
%
 % !TEX root = pseudof.tex
\begin{abstract}
A pseudoforest is a graph whose connected components have at most one cycle. Let $X$ be a pseudoforest modulator of graph $G$, \ie{} a vertex subset of $G$ such that $G-X$ is a pseudoforest. We show that \VCproblem{} admits a polynomial kernel being parameterized by the size of the pseudoforest modulator. In other words, we provide a polynomial time algorithm that for an input graph $G$ and integer $k$, outputs a graph $G'$ and integer $k'$, such that $G'$ has $\cO(|X|^{12})$ vertices and $G$ has a vertex cover of size $k$ if and only if $G'$ has vertex cover of size $k'$.
%Our result extends known results on ``structural'' kernelization for  \VCproblem{}. 
We complement our findings by proving that 
%We then show that 
 there is no polynomial kernel for \VCproblem{} parameterized by the size of a modulator to a \emph{mock forest} (a graph where no cycles share a vertex) unless \NPincoNPpoly{}. In particular, this  also rules out polynomial kernels when parameterized by the size of a modulator to outerplanar   and cactus graphs.

\end{abstract}
%
 % !TEX root = pseudof.tex
\section{Introduction}\label{sec:intro}
Kernelization is  a fundamental algorithmic methodology rooted in parameterized complexity. 
It also serves as a  
rigorous mathematical tool for analyzing   certain polynomial-time preprocessing or data-reductions algorithms.
  In this paper we provide new kernelization algorithm for ``structural'' parameterization of  \VCproblem.

In the \VCproblem{} problem, we are given as input a graph $G$ and a positive
integer $k$, and are asked if there exists a set $S$ of at most
$k$ vertices in $G$ such that every edge in $G$ is adjacent to
at least one of the vertices in $S$; such an $S$ is called a \emph{vertex
cover} of $G$. 
%%FF cut for WG
%  \VCproblem{}   is one of the first problems that Karp showed
%to be NP-complete~\cite{Karp72}, and is one of the six ``core'' NP-complete
%from  Garey and Johnson~\cite[Section 3.1]{GareyJ79}.
%The \VCproblem{} problem is considered to be the drosophila of the field of parameterized complexity and it often serves as a testbed for new algorithmic techniques in the area.    
%In particular, \VCproblem{} played an important role in the theory of polynomial kernelization. Different kernelization techniques like reductions 
% \cite{BussG93,BalasubramanianFR98}, 
%crown rule~\cite{chorFJ04}, or linear programming based kernelization
%~\cite{ChenKJ01}, see also \cite{CFKLMPPS2014} for an overview, were initially developed for \VCproblem.
 As a part of a general program on kernelization with structural parameterization, 
Jansen and Bodlaender   
\cite{JansenB11} initiated  the study of kernelization for \VCproblem{} with ``refined'' parameterization 
%%FF cut for WG
% In the ``classic'' kernelization, say for \VCproblem, the parameter is the solution size. For example,  such a kernelization for \VCproblem{} outputs an equivalent instance with at most $2k$ vertices~\cite{ChenKJ01}. 
% But is it  possible to say anything meaningful about kernelization with a ``structural'' parameter? For example, it is well known that \VCproblem{} is fixed-parameter tractable when parameterized by the treewidth of an input graph, so does it also admit a kernel which size is a polynomial in the treewidth? As the first step in this direction, 
%  Jansen and Bodlaender   
%\cite{JansenB11} have shown that    \VCproblem{}
 by showing that  it admits a polynomial kernel when parameterized by the size of a feedback vertex set, i.e. a forest-modulator. Since a feedback vertex set can be significantly smaller than a vertex cover,  in various  situations such a kernel can  be preferable.
  
  It is a very natural question if the  kernelization  result  of Jansen and Bodlaender can be extended to parameters which are ``stronger'' than the size of a feedback vertex set. 
  Forests are exactly the graphs of treewidth one and a natural direction of such an extension would be to explore the parameterization by a  constant treewidth modulator. However, 
  %the proof of Jansen and Bodlaender heavily exploits the property that a forest does not contain a cycle and  graphs of treewidth two can have many cycles. 
  %The reasons why 
%  
%  difficulty of extending the kernelization of Jansen and Bodlaender to stronger parameters
  %Moreover,
  as  it was shown by Cygan et al. \cite{CyganLPPS14},  for each $t\geq 2$,  \VCproblem{} does not admit a polynomial kernel being parameterized by the size of the treewidth $t$ modulator unless 
 $\classNP\subseteq\classCoNP/\text{\rm poly}$. Since
the result of Cygan et al. \cite{CyganLPPS14} rules out polynomial kernels for \VCproblem{} even when  parameterized by treewidth-2 modulators,
the next natural step in the study of polynomial kernelization for \VCproblem{}  is to see if the problem admits a polynomial kernel when parameterized by a modulator to some subclasses of graphs of treewidth $2$. Towards this end, Majumdar, Raman and Saurabh \cite{majumdar_et_al:LIPIcs:2015:5594} obtain a polynomial kernel for \VCproblem{} parameterized by the size of a degree-2 modulator. 

In this work we show that  \VCproblem{} admits a polynomial kernel when the parameter is the size of a pseudoforest modulator. More precisely, a \emph{pseudoforest} is an undirected graph in which every connected component has at most one cycle. In a graph $G$, a vertex set $X$ is a \emph{pseudoforest modulator} if the graph $G-X$ obtained from $G$ by deleting $X$ is a pseudoforest.
 We define the following problem

\defparproblem{\VCPF{} (\VCPFshort)}{A simple undirected graph $G$, a pseudoforest-modulator set $X\subseteq V(G)$ such that $G-X$ is a pseudoforest, integer $k$.}{Size of a pseudoforest modulator $|X|$.}{Does $G$ contain a vertex cover of size at most $k$?}

\medskip\noindent\textbf{Our results.}
We  show that \VCPFshort {} admits a polynomial kernel with $\cO(|X|^{12})$ vertices. Since every feedback vertex set is  a pseudoforest-modulator  and every degree-2-modulator is also a pseudoforest-modulator, our result extends the borders of polynomial  kernelization  for \VCproblem{}  established by  Jansen and Bodlaender   
\cite{JansenB11} and by Majumdar et. al. \cite{majumdar_et_al:LIPIcs:2015:5594}.

We complement our kernelization algorithm with the lower bound.  Let us observe that the work of Cygan et al. \cite{CyganLPPS14} does not rule out the existence of polynomial kernels when the problem is parameterized by the size of a modulator to some proper subclass of treewidth-2 graphs, like outerplanar graphs or cactus graphs, \ie{} graphs where every 2-connected component is a cycle.  
We refine the known lower bounds by proving that a  polynomial kernel for  \VCproblem{}  parameterized by  the size of mock forest modulator would imply  $\classNP\subseteq\classCoNP/\text{\rm poly}$.  (\emph{Mock forest}  is a graph with no two cycles  sharing a vertex and thus of treewidth at most $2$.) Since a mock forest is also   outerplanar and   cactus graph, this rules out polynomial kernels 
parameterized by the size of a modulator to these classes as well.

While we state our kernelization  result assuming that a pseudoforest modulator is given as a part of the input, this condition can be omitted. There are several   approximation algorithms for 
 pseudoforest modulator. For example, computing a modulator to a  pseudoforest is a special case of the \textsc{$\mathcal{F}$-Deletion} problem considered in \cite{FominLMS12}, and there is a 
 randomized constant factor approximation algorithm of running time $\cO (nm)$. Also since pseudoforests of a graph form independent sets of a 
 bicircular matroid, it follows  from the generic framework of Fujito
\cite{Fujito98}  that there is a 
deterministic polynomial time $2$-approximation algorithm for pseudoforest modulator. 

%%FF cut for WG
%While the size of a minimum vertex cover does not exceed the size of a minimum feedback vertex set, which in turn does not exceed the minimum size of a pseudoforest modulator, selection of a ``stronger'' parameter has a drawback. Parameterization by vertex cover results with a  vertex-linear kernel, by feedback vertex set with a cubic kernel and by pseudoforest modulator  with a kernel on  $\cO(|X|^{12})$ vertices. We do not know yet if such a behavior is due to the limitation of the current algorithmic techniques or due to the nature of the problem. Since these kernelization algorithms seems to be uncomparable, the right strategy for a practitioner would be to  run an approximation algorithm for each of the parameters, and then to decide which of the kernelizations routines to implement.

The proof of our main result is constructive and consists of several reduction rules. While some of our steps follow Jansen and Bodlaender   
\cite{JansenB11}, the  essential part of the proof is different. Our algorithm is based on a novel combinatorial result about  maximum independent sets  in  pseudotrees  (Lemma~\ref{lem:pseudotree-small-Z}), which is also interesting in its own. 

The remaining part of the paper is organized as follows. In Section~\ref{sec:prel} we give preliminaries on notation and definitions. In Section~\ref{sec:kern} we develop the kernelization algorithm for \VCPF, which is the main content of this paper. The section containing the proof of Lemma~\ref{lem:pseudotree-small-Z} is quite technical and is found in Section~\ref{sec:pseudotree-small-Z}. We obtain lower bounds for \VCproblem{}   parameterized by the vertex deletion distances to mock forests.  %Finally, we summarize our result and give suggestions for further research in Section~\ref{sec:conclusion}.

%\ISPF  (\ISPFshort)
%
%Jansen, Raman,  and Vatshelle:  {Feedback Vertex Set} admits a polynomial kernel when parameterized by the size of  a pseudoforest modulator
%\cite{JansenRV14}.

%
 % !TEX root = pseudof.tex
\section{Preliminaries}\label{sec:prel}

\noindent\textbf{Graph theoretic notions.} 

In this paper we are concerned only with finite, simple, loopless, undirected graphs. A graph $G$ consists of a set of vertices $V(G)$ and a set of edges $E(G)$. For a vertex $v \in V(G)$, its neighborhood $N(v)$ is the set of all vertices adjacent to $v$. The closed neighborhood of $v$ is denoted $N[v] = N(v) \cup \{v\}$. Similarly, for a set $S \subseteq V(G)$ we denote its neighborhood $N(S) = (\bigcup_{v\in S} N(v)) \setminus S$, and its closed neighborhood $N[S] = N(S) \cup S$. The degree of a vertex is the number of vertices adjacent to $v$ in the graph, $deg(v) = |N(v)|$. A vertex of degree $1$ is called a leaf. In cases where it may be unclear which graph is being referred to, a subscript is added, e.g.\@ $N_{G'}(S)$ denotes the neighborhood of $S$ in the graph $G'$.

A subgraph $G' \subseteq G$ is a graph such that $V(G') \subseteq V(G)$ and $E(G') \subseteq E(G)$. For a set $S \subseteq V(G)$, the subgraph induced by the vertices of $S$ is denoted $G[S]$. The graph where $S$ and its incident edges are removed, is denoted as $G-S = G[V(G)\setminus S]$. Similarly, the graph $G[V(G) \setminus \{v\}]$ obtained by removing a single vertex $v$ and its incident edges is denoted $G-v$, and the graph obtained by removing a subgraph $G' \subseteq G$ and its incident edges is denoted $G-G' = G[V(G)\setminus V(G')]$.

A tree $T$ is a connected graph which contains no cycles. A tree is \emph{rooted} if one vertex $r \in V(T)$ has been designated as the root. In rooted trees, all vertices have a natural orientation with respect to the root. For a non-leaf vertex $a \in V(T)$ we denote the set of its children by $C(a)$. For two vertices $a,b \in V(T)$, we say that $a$ is an ancestor of $b$ if $a$ is on the path from $r$ to $b$ (by this definition, a vertex is always an ancestor of itself). We let the subtree rooted at the vertex $a$ be denoted by $T_a = T[\{b \text{ }| \text{ } a \text{ is an ancestor of }b\}]$. A subtree $T_a$ is a \emph{strict} subtree if $T \neq T_a$, in other words if $a \neq r$. A graph $F$ is a \emph{forest} if every connected component of $F$ is a tree.

An \emph{independent set} of $G$ is a set $I \subseteq V(G)$ such that every edge of $G$ has at most one endpoint in $I$. We let $\alpha(G)$ denote the independence number of $G$, \ie{} the largest number of distinct vertices which can constitute an independent set of $G$. An independent set $I$ of $G$ with size $|I| = \alpha(G)$ is called a maximum independent set, abbreviated MIS.

A \emph{feedback vertex set} (FVS) of $G$ is a set $X \subseteq V(G)$ such that $G-X$ is a forest. A graph $G$ is a \emph{mock forest} if no cycles of $G$ share a vertex.

\medskip
\noindent\textbf{Kernels and reductions.} 
 \begin{definition}[Kernelization, kernel]
A {\em{kernelization algorithm}}, or simply a {\em{kernel}}, for a parameterized problem $Q$ is an algorithm $\mathcal{A}$ that, given an instance $(I,k)$ of $Q$, works in polynomial time and returns an equivalent instance $(I',k')$ of $Q$. Moreover, we require that $k' + |I'| \leq g(k)$ for some computable function $g\colon \mathbb{N}\to \mathbb{N}$.
\end{definition}
If the upper bound $g(\cdot)$ is a polynomial  function of the parameter, then we say that $Q$ admits a {\emph{polynomial   kernel}}.

%In this paper we obtain a polynomial kernel by making use of  reduction rules. A reduction rule  for a parameterized problem~$Q$ is  a function~$\phi \colon \Sigma^{\ast} \times \mathbb{N} \rightarrow \Sigma^{\ast} \times \mathbb{N}$ that maps an instance~$(I, k)$ of~$Q$ to an equivalent instance~$(I', k')$ of~$Q$ such that $\phi$ is computable in time polynomial in~$|I|$ and~$k$. We say that two instances of~$Q$ are \emph{equivalent} if~$(I, k) \in Q$ if and only if ~$(I',k') \in Q$; this property of the reduction rule $\phi$, that it translates an instance to an equivalent one, is sometimes referred to as the \emph{safeness}  of the reduction rule. 

%
 % !TEX root = pseudof.tex
\section{Kernelization}\label{sec:kern}

This section contains the kernelization of \VCPF\ and is the main section of the paper. We will first develop a kernel for \ISPF, and then by the immediate correspondence between the \VCproblem\ and \ISproblem\ problems the kernel for \VCPFshort\ will follow. For the remainder of this section, we will thus focus on the \ISPF\ problem:

\defparproblem{\ISPF{} (\ISPFshort)}{A simple undirected graph $G$, a pseudoforest-modulator set $X\subseteq V(G)$ such that $G-X$ is a pseudoforest, integer $k$.}{Size of a pseudoforest modulator $|X|$.}{Does $G$ contain an independent set of size at least $k$?}

Throughout the section, let $F := G - X$ be the induced subgraph remaining after the modulator $X$ has been removed from $G$. Note that $F$ is a pseudoforest.

We say that $(G,X,k)$ is a \emph{\yes-instance} of \ISPFshort\ if there exists an independent set $I$ of $G$ such that $|I| \geq k$. We say it is a \emph{\no-instance} if there is no such set.

\begin{definition} \label{df:conflicts}
    (Conflicts) Let $(G,X,k)$ be an instance of \ISPFshort\ where $F' \subseteq F$ is a subgraph of the pseudoforest $F$ and $X' \subseteq X$ is a subset of the modulator $X$. Then the number of \emph{conflicts} induced by $X'$ on $F'$ is defined as $\CONF_{F'}(X') := \alpha(F') - \alpha(F' - N_	G(X'))$.
\end{definition}

\begin{figure}[ht] \label{pfm:fig:conflicts}
    \centering
    \includegraphics[scale=1.0]{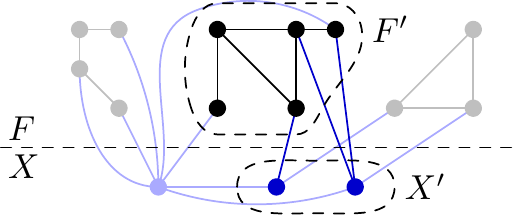}
    \caption{Conflicts: In the figure, we observe that $\alpha(F') = 3$, and $\alpha(F'-N_G(X')) = 1$. Hence, we get that $\CONF_{F'}(X') = 2$. In other words, the number of conflicts induced by $X'$ on $F'$ is $2$.}
\end{figure}

Choosing $X'$ to be in the independent set $I$ of $G$ may prevent some vertices in $F'$ from being included in same set $I$. In particular, no vertex $v \in V(F') \cap N_G(X')$ can be chosen to be in $I$. In light of this, the term $\CONF_{F'}(X')$ can be understood as the price one has to pay in $F'$ by choosing to include $X'$ in the independent set.

Observe that $\CONF_{F'}(X')$ is polynomial time computable when the independence number $\alpha(F')$ is.

\begin{definition} \label{df:chunks}
    (Chunks) Let $(G,X,k)$ be an instance of \ISPFshort. A set $X' \subseteq X$ is a \emph{chunk} if the following hold:
    \begin{itemize}
        \item $X'$ is independent in $G$,
        \item The size of $X$ is between 1 and 3, \ie{} $1 \leq |X| \leq 3$, and
        \item The number of conflicts induced by $X'$ on the pseudoforest $F$ is less than $|X|$, \ie{} $\CONF_F(X') < |X|$.
    \end{itemize}
    We let $\mathcal{X}$ be the collection of all chunks of $X$.
\end{definition}

The collection of chunks $\mathcal{X}$ can be seen as all suitable candidate subsets of size at most 3 from $X$ to be included in a maximum independent set $I$ for $G$. The idea is that $I$ may contain a chunk as a subset, but need not include a subset $X' \subseteq X$ of size at most 3 which is \emph{not} a chunk. This will allow us to discard potential solutions containing non-chunk subsets of $X$ with size at most 3. In order for this intuition to hold, we provide the following lemma, originally by Jansen and Bodlaender \cite[Lemma 2]{JansenB11} though slightly altered to fit our purposes.

\begin{lemma} \label{lem:forbidden}
    If there exists an independent set of size $k$ in $G$, then there exists an independent set $I$ of $G$ such that $|I| \geq k$ and for all subsets $X' \subseteq X \cap I$, $\CONF_F(X') < |X|$.
\end{lemma}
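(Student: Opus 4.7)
The plan is to produce $I$ by starting from any independent set $I_0$ of size at least $k$, and either returning $I_0$ itself or, if the conflict condition fails for $I_0$, falling back wholesale to a maximum independent set $Y$ of the pseudoforest $F$.

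First, I would record a monotonicity property of $\CONF_F$: whenever $X_1 \subseteq X_2 \subseteq X$, we have $N_G(X_1) \subseteq N_G(X_2)$, so $F - N_G(X_1) \supseteq F - N_G(X_2)$ and hence $\alpha(F - N_G(X_1)) \geq \alpha(F - N_G(X_2))$, giving $\CONF_F(X_1) \leq \CONF_F(X_2)$. This collapses the universal quantifier ``for all $X' \subseteq X \cap I$'' to the single choice $X' = X \cap I$: the conflict condition on $I$ is equivalent to $\CONF_F(X \cap I) < |X|$.

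Now I would split on $I_0$. If $\CONF_F(X \cap I_0) < |X|$, set $I := I_0$ and we are done. Otherwise $\CONF_F(X \cap I_0) \geq |X|$, and I take $I := Y$ where $Y$ is a maximum independent set of the pseudoforest $F$; note that $Y$ is independent in $G$ because $F = G[V(G)\setminus X]$ is an induced subgraph of $G$. It remains to check $|Y| \geq k$. Since $V(F) \cap I_0$ is independent in $F$ and, by independence of $I_0$ in $G$, avoids $N_G(X \cap I_0)$, we have
$$|V(F) \cap I_0| \;\leq\; \alpha(F - N_G(X \cap I_0)) \;=\; \alpha(F) - \CONF_F(X \cap I_0) \;\leq\; \alpha(F) - |X|.$$
Combined with $|X \cap I_0| \leq |X|$, this yields $|I_0| \leq \alpha(F) = |Y|$, so $|Y| \geq |I_0| \geq k$. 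Finally, $Y \subseteq V(F)$ forces $X \cap I = X \cap Y = \emptyset$, leaving only $X' = \emptyset$ to check; for this, $\CONF_F(\emptyset) = 0 < |X|$.

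The conceptual obstacle, if any, is to resist trying to \emph{locally} repair $I_0$ by swapping out a specific bad subset $X'$ and adding back vertices of $F$: removing only $X'$ need not free enough vertices in $F$ (the other elements of $X \cap I_0$ may still block them), so a direct swap can fail to preserve the size. The key observation is that monotonicity reduces the worst offender to $X \cap I_0$ itself, and the hypothesis $\CONF_F(X \cap I_0) \geq |X|$ is already strong enough to force the global size bound $|I_0| \leq \alpha(F)$, at which point passing to any maximum independent set of $F$ trivially satisfies the required conflict condition.
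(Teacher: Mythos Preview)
Your proof is correct and follows essentially the same approach as the paper: both arguments show that whenever some $X' \subseteq X \cap I_0$ has $\CONF_F(X') \geq |X|$, the bound $|I_0| \leq |X| + \alpha(F - N_G(X')) \leq \alpha(F)$ holds, so one may simply replace $I_0$ by a maximum independent set of $F$. Your explicit monotonicity step reducing to $X' = X \cap I_0$ is a pleasant streamlining but not a genuinely different route; the paper instead works directly with an arbitrary offending $X'$ and uses the (slightly weaker) inclusion $V(F)\cap I_0 \subseteq V(F)\setminus N_G(X')$ to reach the same size bound.
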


\begin{proof}
    Assume that $I' \subseteq V(G)$ is an independent set for which there exists some $X' \subseteq I' \cap X$ such that $CONF_F(X') \geq |X|$. We will show that then there is also another independent set $I$ such that $|I| \geq |I'|$ and for all subsets $X'' \subseteq X$ with $CONF_F(X'') \geq |X|$, $X''$ is not a subset of $I$.
    
    Because $CONF_F(X') \geq |X|$, we have that $\alpha(F) \geq |X| + \alpha(F-N_G(X'))$. Now consider the independent set $I'$. Some of its vertices are in $X$, however no more than $|X|$. The remainder of vertices of $I'$ are in $V(F)$, however no more than $\alpha(F-N_G(X'))$. Thus $|X| + \alpha(F-N_G(X')) \geq |I'|$. But then $\alpha(F) \geq |I'|$, and we see that a MIS of the pseudoforest $F$ satisfies the requirements of the lemma.
\end{proof}

\begin{definition} \label{def:anchortriangle}
    (Anchor triangle) Let $(G, X, k)$ be an instance of \ISPFshort. Let $P$ be a connected component in $F$ with $V(P) = \{p_1, p_2, p_3\}$. Then $P$ is an \emph{anchor triangle} if there exists a set $\{x_1,x_2,x_3\} \subseteq X$ such that:
    \begin{itemize}
        \item $N_G(p_1) = \{p_2, p_3, x_1\}$
        \item $N_G(p_2) = \{p_1, p_3, x_2\}$
        \item $N_G(p_3) = \{p_1, p_2, x_3\}$
    \end{itemize}
% (a) $N_G(p_1) = \{p_2, p_3, x_1\}$, 
% (b)$N_G(p_2) = \{p_1, p_3, x_2\}$, and 
%     (c) $N_G(p_3) = \{p_1, p_2, x_3\}$.
    An anchor triangle is \emph{non-redundant} if there is no other anchor triangle with the same open neighborhood in $G$. 
\end{definition}

\begin{figure}[ht] \label{fig:anchortriangle}
	\centering
    \includegraphics{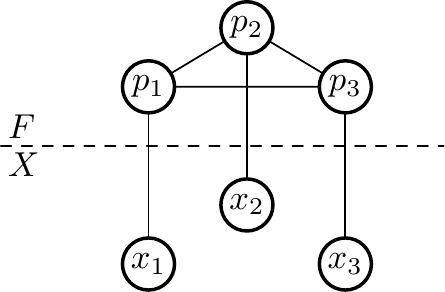}
    \caption{The connected component $P \subseteq F$ with vertices $V(P) = \{p_1, p_2, p_3\}$ is an anchor triangle for the triple $\{x_1, x_2, x_3\} \subseteq X$.}
    \vspace{-.7cm}
\end{figure}

\begin{definition} \label{def:unnecessary}
    (Unnecessary triple) A triple ${_3X} \subseteq X$ is said to be \emph{unnecessary} if there exists an anchor triangle $P$ such that $N_G(P) = {_3X}$.
\end{definition}

The fact that a triple ${_3X} \subseteq X$ is unnecessary as defined above should intuitively be understood with respect to constructing an independent set. If a triple ${_3X}$ is unnecessary then there exists a MIS which does not contain all of ${_3X}$. This intuition is supported by the next lemma.

\begin{lemma} \label{lem:unnecessary}
    Let $(G, X, k)$ be an instance of \ISPFshort. If there exists an independent set of size at least $k$ in $G$, then there exists an independent set $I$ of $G$ with $|I| \geq k$ containing no unnecessary triple ${_3X} \subseteq X$.
\end{lemma}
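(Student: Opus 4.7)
The plan is to use an iterative exchange argument. Starting from any independent set $I_0$ of size at least $k$, I will repeatedly replace a single $X$-vertex of the current set with an $F$-vertex drawn from a witnessing anchor triangle, using $|I \cap X|$ as a termination measure.

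For the exchange step, suppose the current independent set $I$ contains an unnecessary triple $\{x_1,x_2,x_3\}$ witnessed by an anchor triangle on vertices $\{p_1,p_2,p_3\}$ with $p_i$ adjacent to $x_i$ (as in Definition~\ref{def:anchortriangle}). Since each $x_i \in I$ and the edge $x_i p_i$ is present in $G$, independence of $I$ forces $p_1,p_2,p_3 \notin I$. I would then define $I' := (I \setminus \{x_1\}) \cup \{p_1\}$. Verifying independence of $I'$ amounts to inspecting the neighbors of $p_1$, which by the definition of an anchor triangle are exactly $\{p_2,p_3,x_1\}$, all absent from $I'$. Clearly $|I'|=|I|$, and since $p_1 \in V(F)$ we have $|I' \cap X| = |I \cap X| - 1$.

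Iterating this exchange until no unnecessary triple remains yields the desired set. Termination is immediate: the measure $|I \cap X|$ is a nonnegative integer that strictly decreases at each step, so the loop halts after at most $|X|$ iterations. Moreover, since a swap never introduces a new vertex of $X$ into the set, it cannot create unnecessary triples that were not already present; but this observation is not even required for correctness, the monotone decrease alone suffices. I expect no real obstacle; the only point worth double-checking is that $p_1$'s entire $G$-neighborhood is contained in $\{p_2,p_3,x_1\}$, which is exactly the content of Definition~\ref{def:anchortriangle} (the definition pins down the neighborhoods in $G$, not merely the degrees in $F$).
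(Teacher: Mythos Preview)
Your proposal is correct and follows essentially the same exchange argument as the paper: swap a vertex of an unnecessary triple for the adjacent vertex of its witnessing anchor triangle, using Definition~\ref{def:anchortriangle} to pin down the full $G$-neighborhood of the swapped-in vertex. Your explicit use of $|I\cap X|$ as a termination measure is slightly more careful than the paper's phrasing, but the idea is identical.
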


\begin{proof}
    We will prove the lemma by constructing the required independent set $I$ satisfying the properties, assuming we have an arbitrary independent set $I'$ of size at least $k$. By definition, for each unnecessary triple ${_3X} \subseteq X \cap I'$, there is at least one anchor triangle $P \subseteq F$ such that $N_G(P) = {_3X}$. Since ${_3X} \subseteq I'$, we have that no vertex of $P$ is in $I'$. Let $p_u \in V(P)$ be an arbitrary vertex of $P$, and let $u \in {_3X}$ be the neighbor of $p_u$ in ${_3X}$. Observe that we can here replace $u$ by $p_u$ in $I'$, and still have $I'$ be an independent set of the same size. We do this for every unnecessary triple ${_3X} \subseteq X \cap I'$ to obtain $I$, which then satisfies the requirement of the lemma.
\end{proof}

\subsection{Reduction Rules} \label{ssec:rr}

We introduce here the reduction rules. Each reduction receives as input an instance $(G, X, k)$ of \ISPF, and outputs an equivalent instance $(G', X', k')$. A reduction is \emph{safe} if the input and output instances are equivalent, that is, $(G, X, k)$ is a \yes-instance if and only if $(G', X', k')$ is a \yes-instance. Reductions \ref{rr:singlesRemoved}, \ref{rr:doublesForbidden} and \ref{rr:ccRemoved} originates in \cite{JansenB11}, though Reduction \ref{rr:ccRemoved} is altered to fit the context of a pseudoforest, which also required some changes to the proof.

Reduction rules will be applied exhaustively starting with lower number rules, until Reduction \ref{rr:ccRemoved} is no longer applicable. During this process, a lower number rule is always applied before a higher number rule if at any point they are both applicable. Then Reductions \ref{rr:split-cycles} and \ref{rr:reducedinstance-from-IS-by-FVS} will be applied once each to obtain the final reduced instance. Note that each reduction is computable in polynomial time.

\begin{redrule} \label{rr:singlesRemoved}
    If there is a vertex $v \in X$ such that $\CONF_F(\{v\}) \geq |X|$, then delete $v$ from the graph $G$ and from the set $X$. We let $G' := G-v$, $X' := X-v$ and $k' := k$.
\end{redrule}

\begin{redrule} \label{rr:doublesForbidden}
    If there are distinct vertices $u,v\in X$ with $uv \notin E(G)$ for which $\CONF_F(\{u,v\}) \geq |X|$, then add edge $uv$ to $G$. We let $G' := (V(G), E(G) \cup \{uv\})$, $X' := X$ and $k' := k$.
\end{redrule}

Reductions~\ref{rr:singlesRemoved} and~\ref{rr:doublesForbidden} are safe due to Lemma~\ref{lem:forbidden}.

\begin{redrule} \label{rr:triplesOptional}
    If there are distinct $u,v,w \in X$ such that $\CONF_F(\{u,v,w\}) \geq |X|$, the set $\{u,v,w\}$ is independent in $G$, and for which there is no anchor triangle $P$ with $N(P) = \{u,v,w\}$, then add an anchor triangle $P' = \{p_u,p_v,p_w\}$ to the graph such that $N(P') = \{u,v,w\}$, and increase $k$ by one. Let $V(G') := V(G) \cup \{p_u,p_v,p_w\}$ and let $E(G') := E(G) \cup \{p_up_v, p_up_w, p_vp_w, p_uu, p_vv, p_ww\})$. Further, let $X' := X$ and let $k' := k+1$.
\end{redrule}
\vspace{-.3cm}
\begin{figure}[ht] \label{pfm:fig:rrtriplesOptional}
    \centering
    \includegraphics[width=.48\linewidth]{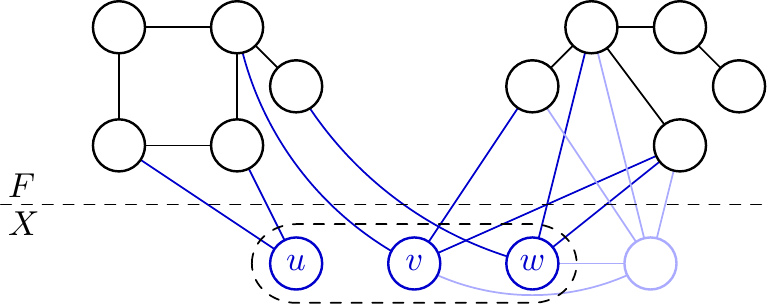}
	\hspace*{.02\linewidth}
	\includegraphics[width=.48\linewidth]{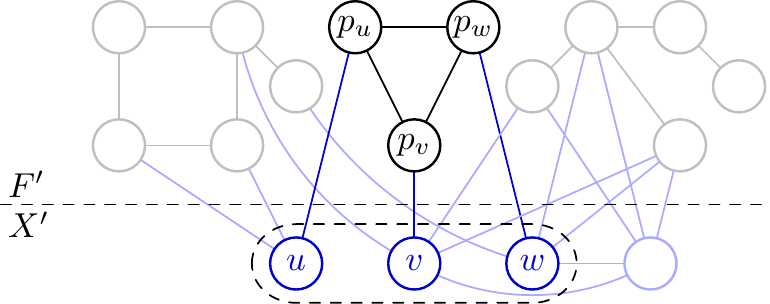}
    \caption{Reduction~\ref{rr:triplesOptional}: Adding an anchor triangle to the independent triple $\{u,v,w\}$ ($k' = k+1$). This makes $\{u,v,w\}$ an unnecessary triple in the output instance.}
\end{figure}
\vspace{-.4cm}
Note that Reduction \ref{rr:triplesOptional} makes the triple $\{u,v,w\} \subseteq X$ unnecessary in the reduced instance as defined in Definition \ref{def:unnecessary}.

\begin{lemma} \label{lem:triplesOptional}
    Reduction \ref{rr:triplesOptional} is safe. Let $(G,X,k)$ be an instance of \ISPFshort\ to which Reduction \ref{rr:triplesOptional} is applicable, and let $(G', X', k')$ be the reduced instance. Then $(G,X,k)$ is a \yes-instance if and only if $(G', X', k')$ is a \yes-instance.
\end{lemma}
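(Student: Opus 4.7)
The plan is to prove the two directions of the equivalence separately. The forward direction (if $(G,X,k)$ is yes, then $(G',X',k')$ is yes) is where the hypothesis of the reduction really bites, and it is essentially where Lemma~\ref{lem:forbidden} is invoked. The backward direction is immediate from the fact that $P'$ is a triangle.

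For the forward direction, suppose $G$ has an independent set of size at least $k$. By Lemma~\ref{lem:forbidden}, we can assume there is such an independent set $I$ with $|I|\geq k$ satisfying $\CONF_F(X'')<|X|$ for every $X''\subseteq I\cap X$. Since the reduction's applicability hypothesis gives $\CONF_F(\{u,v,w\})\geq |X|$, the triple $\{u,v,w\}$ cannot be contained in $I$, so at least one of $u,v,w$ is missing from $I$; say $u\notin I$, without loss of generality. By construction of $G'$, the new vertex $p_u$ has neighborhood exactly $\{p_v,p_w,u\}$ in $G'$, and none of these three vertices belongs to $I$ (the first two are newly introduced, the third is excluded by choice). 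Therefore $I\cup\{p_u\}$ is an independent set of $G'$ of size at least $k+1=k'$.

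For the backward direction, suppose $I'$ is an independent set of $G'$ with $|I'|\geq k'=k+1$. Since $P'=\{p_u,p_v,p_w\}$ forms a triangle in $G'$, we have $|I'\cap V(P')|\leq 1$. The set $I''=I'\setminus V(P') = I'\cap V(G)$ is then an independent set of $G$ (because $G$ is an induced subgraph of $G'$ on $V(G)$), and its size is at least $|I'|-1\geq k$. Hence $(G,X,k)$ is a yes-instance.

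The only subtle point is the forward direction: one must verify that the new vertex $p_u$ is truly free to be added, which relies both on the triangle structure of $P'$ (so that the only neighbors of $p_u$ in $G'\setminus V(G)$ are $p_v,p_w$, neither of which is in $I$) and on the specific neighbor $u\in\{u,v,w\}\setminus I$ chosen so that $p_u$'s unique neighbor inside $V(G)$ is also outside $I$. The existence of such $u$ is precisely what the $\CONF_F(\{u,v,w\})\geq |X|$ hypothesis of the reduction, combined with Lemma~\ref{lem:forbidden}, guarantees.
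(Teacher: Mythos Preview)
Your proof is correct and follows essentially the same approach as the paper's own proof: both directions invoke Lemma~\ref{lem:forbidden} to guarantee one of $u,v,w$ is missing from the independent set (allowing $p_u$ to be added), and use the triangle structure of $P'$ to bound $|I'\cap V(P')|\le 1$ in the backward direction. Your write-up is slightly more explicit about why $p_u$ is free to join $I$, but the argument is identical in substance.
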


\begin{proof}
    For the forward direction of the proof, assume $(G,X,k)$ is a \yes-instance, let $u,v,w \in X$ be the elements which triggers the reduction, and let $P' = \{p_u,p_v,p_w\}$ be the vertices of the added anchor triangle in the reduced instance. Let $I$ be an independent set of $G$ with $|I| \geq k$. By Lemma~\ref{lem:forbidden}, we assume that at least one of $u,v,w$ is not in $I$. Without loss of generality, let $u \notin I$. For the reduced instance, observe that $I' := I \cup \{p_u\}$ is an independent set of $G'$ with size $|I'| = |I| + 1 \geq k+1 = k'$, which makes $(G',X',k')$ a \yes-instance.
    
    For the backward direction, assume that $(G',X',k')$ is a \yes-instance, and let $P'$ be the added anchor triangle in the reduced instance. Let $I'$ be an independent set of $G'$ with $|I'| \geq k'$. Because $P'$ induces a triangle in $G'$, at most one vertex of $P'$ is in $I'$. Thus, $I := I' \setminus P'$ is an independent set of $G$ with $|I| \geq |I'|-1 \geq k' - 1 = k$, which makes $(G,X,k)$ a \yes-instance.
\end{proof}

\begin{redrule} \label{rr:ccRemoved}
    If there exists a connected component $P$ in $F$ which is not a non-redundant anchor triangle, and for every chunk $X' \in \mathcal{X}$ there is no conflicts induced by $X'$ on $P$, i.e.  $\CONF_{P}(X') = 0$, then remove $P$ from $G$ and reduce $k$ by $\alpha (P)$. We let $G' := G-P$, $X' := X$ and $k' := k - \alpha (P)$.
\end{redrule}

To prove that Reduction \ref{rr:ccRemoved} is safe, we will rely on the following lemma, which states that any pseudotree has a small (at most size three) obstruction in terms of obtaining a maximum independent set.

\setcounter{customlem}{\thelemma} %To set this lemma number to customlem 
\begin{lemma} \label{lem:pseudotree-small-Z}
    Let $P$ be a pseudotree and let $Z$ be a set of vertices such that $\alpha(P) > \alpha(P-Z)$. Then there exist three (possibly non-distinct) vertices $u,v,w \in Z \cap V(P)$ such that $\alpha(P) > \alpha(P-\{u,v,w\})$.
\end{lemma}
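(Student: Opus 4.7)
I would take $B \subseteq Z$ to be an inclusion-minimal subset with $\alpha(P-B) < \alpha(P)$; such a $B$ exists because $Z$ itself has this property. It suffices to show $|B| \le 3$, for then $u,v,w$ may be taken as the elements of $B$, using repetitions when $|B|<3$ (allowed by the ``possibly non-distinct'' clause). The crucial consequence of minimality is that for every $x \in B$ there exists an MIS $I_x$ of $P$ with $I_x \cap B = \{x\}$: by minimality some MIS of $P$ avoids $B \setminus \{x\}$, and since $B$ is bad this MIS must hit $B$ somewhere, so it hits $B$ precisely at $x$.

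The main technical tool is a swap argument on symmetric differences of MIS's. For any two MIS's $I, I'$ of $P$, the induced subgraph $P[I \triangle I']$ is bipartite with sides $I \setminus I'$ and $I' \setminus I$, and in each of its connected components the two sides must have equal size---otherwise exchanging the smaller side with the larger would yield a strictly larger independent set, violating maximality. Hence for any component $C$ one may swap $C \cap I$ with $C \cap I'$ inside $I$ to obtain another MIS of the same size. I would next establish the following key sub-claim: for any $x, y \in B$, the vertices $x$ and $y$ lie in the same connected component of $P[I_x \triangle I_y]$. If not, swapping just the $x$-component inside $I_x$ produces an MIS $J$ that drops $x$ (and no other element of $B$, since $I_x \cap B = \{x\}$) and adds only vertices of $I_y$ lying outside the $y$-component (hence disjoint from $I_y \cap B = \{y\}$); then $J \cap B = \emptyset$, contradicting that $B$ is bad.

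To finish I would case-split on the structure of $P$. If $P$ is a tree, any three distinct $v_1, v_2, v_3 \in B$ determine a unique median vertex $m$ through which all three pairwise paths pass; by the sub-claim $m$ lies in every $P[I_{v_i} \triangle I_{v_j}]$. Writing $b_i = 1$ if $m \in I_{v_i}$ and $0$ otherwise, the three conditions become $b_i \oplus b_j = 1$ for all distinct $i,j \in \{1,2,3\}$, whose sum mod $2$ is $0 \neq 1$, a contradiction. (The degenerate case in which some $v_i$ is itself the median is similar: then $v_i \in P[I_{v_j} \triangle I_{v_k}]$ forces $v_i$ into exactly one of $I_{v_j}, I_{v_k}$, yet $v_i$ lies in neither since $I_{v_l} \cap B = \{v_l\}$ for $l \in \{j,k\}$.) Thus in the tree case already $|B| \le 2$. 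The pseudotree case allows $|B| = 3$---tight, as witnessed by $P = C_5$ with $B = \{v_1, v_2, v_3\}$---but still rules out $|B| \ge 4$: one reduces to the tree argument by contracting the unique cycle of $P$ to a super-vertex, and the cycle can ``absorb'' at most one extra witness beyond the tree bound. The main obstacle is precisely this last step: the cycle makes paths in $P[I_x \triangle I_y]$ non-unique and allows components to wrap around it, so a careful structural analysis is required to show that no four-element minimal $B$ can thread the cycle consistently, which is the technical content the authors isolate in their separate section.
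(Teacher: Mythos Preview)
Your tree-case argument is correct and notably cleaner than the route the paper takes. The minimal-$B$ setup, the swap lemma on $P[I_x \triangle I_y]$, and the median-vertex parity contradiction together give a self-contained proof that in a tree every inclusion-minimal blocker has size at most $2$; this reproves Proposition~\ref{pfm:prop:tree-small-Z} (the Jansen--Bodlaender tree lemma the paper imports as a black box) by an entirely different mechanism, without any induction on tree height or $\alpha$-criticality bookkeeping.

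However, the content of Lemma~\ref{lem:pseudotree-small-Z} is precisely the pseudotree case, and there your proposal is not a proof. The sentence ``one reduces to the tree argument by contracting the unique cycle of $P$ to a super-vertex, and the cycle can absorb at most one extra witness'' is not an argument: contraction does not respect independent sets, and your swap sub-claim only tells you that $v_i$ and $v_j$ lie in a common component of $P[I_{v_i}\triangle I_{v_j}]$. In a pseudotree such a component may contain the whole cycle, so there is no unique $v_i$--$v_j$ path, no median vertex, and no parity obstruction to invoke. You explicitly concede this (``a careful structural analysis is required \ldots\ which is the technical content the authors isolate''), but that analysis \emph{is} the lemma; the tree case was already known.

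For comparison, the paper does not use minimal blockers or symmetric differences at all. It decomposes $P$ into its cycle $C$ with rooted trees attached, proves an auxiliary two-level inductive statement (Lemma~\ref{pfm:lem:alphacrit-trees-inductive}) controlling how deleting one or two vertices of $Z$ from an attached tree can flip the $\alpha$-criticality of its root while forcing (or not forcing) a drop in $\alpha$, and then runs a lengthy case analysis classifying cycle vertices as \emph{redeemable}, \emph{blockable}, or \emph{free}. The three witnesses $u,v,w$ are produced explicitly in each case, sometimes two from one attached tree via Lemma~\ref{pfm:lem:alphacrit-trees-inductive}(a)(i), sometimes one each from three different cycle positions. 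Your framework gives no handle on this $\alpha$-critical interaction between attached trees and the cycle, and I do not see how to push the median/parity idea through once the components $C_{ij}$ are allowed to wrap around $C$. If you want to salvage the approach, you would need a genuine replacement for the median argument in graphs of cycle rank~$1$; as written, the pseudotree case is missing.
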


The proof of the above lemma is quite technical, and is postponed till Section~\ref{sec:pseudotree-small-Z} in order to preserve the flow of the kernelization algorithm. Taking Lemma~\ref{lem:pseudotree-small-Z} as a black box, we are able to make the following observation:

\begin{observation} \label{obs:small-conflict}
    Let $P \subseteq F$ be a connected component in the pseudoforest $F$ and let $X' \subseteq X$ be an independent set such that $\CONF_{P}(X') > 0$. Then there exists some $X'' \subseteq X'$ with $1 \leq |X''| \leq 3$ such that $\CONF_{P}(X'') > 0$.
\end{observation}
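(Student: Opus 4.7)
The plan is to reduce the observation directly to Lemma~\ref{lem:pseudotree-small-Z}. Since $P$ is a connected component of the pseudoforest $F$, it is itself a pseudotree. The hypothesis $\CONF_P(X') > 0$ unfolds by Definition~\ref{df:conflicts} to $\alpha(P) > \alpha(P - N_G(X'))$. Setting $Z := N_G(X') \cap V(P)$, this is precisely the hypothesis of Lemma~\ref{lem:pseudotree-small-Z} applied to the pseudotree $P$ and the vertex set $Z$.

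Applying Lemma~\ref{lem:pseudotree-small-Z}, I obtain three (possibly non-distinct) vertices $u,v,w \in Z \cap V(P)$ with $\alpha(P) > \alpha(P - \{u,v,w\})$. The key step is to translate these vertices of $P$ back into a small subset of $X'$. Since each of $u,v,w$ lies in $N_G(X')$, for each of them I can pick a neighbor in $X'$: choose $x_u,x_v,x_w \in X'$ such that $x_u u, x_v v, x_w w \in E(G)$. Define $X'' := \{x_u,x_v,x_w\} \subseteq X'$, so $1 \le |X''| \le 3$.

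By construction $\{u,v,w\} \subseteq N_G(X'')$, and therefore
\[
V(P) \cap N_G(X'') \supseteq \{u,v,w\},
\]
which yields the subgraph inclusion $P - N_G(X'') \subseteq P - \{u,v,w\}$. Taking independence numbers gives $\alpha(P - N_G(X'')) \le \alpha(P - \{u,v,w\}) < \alpha(P)$, so
\[
\CONF_P(X'') = \alpha(P) - \alpha(P - N_G(X'')) > 0,
\]
as required. The only nontrivial ingredient is Lemma~\ref{lem:pseudotree-small-Z} itself; once it is available, the rest is simply the bookkeeping step of lifting the three vertices in $P$ to at most three witnesses in $X'$ via the neighborhood relation.
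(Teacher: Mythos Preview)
Your proposal is correct and follows essentially the same approach as the paper: apply Lemma~\ref{lem:pseudotree-small-Z} with $Z = N_G(X') \cap V(P)$ to obtain $u,v,w$, then lift each of these vertices to a neighbor $x_u,x_v,x_w \in X'$ and set $X'' := \{x_u,x_v,x_w\}$. Your write-up is slightly more explicit in justifying the final inequality via the inclusion $P - N_G(X'') \subseteq P - \{u,v,w\}$, but the argument is the same.
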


We see that the observation is true, since by Lemma~\ref{lem:pseudotree-small-Z} there exist $u,v,w \in N_G(X') \cap V(P)$ such that $\alpha(P) > \alpha(P-\{u,v,w\})$. Then for each element $u,v,w$, pick an arbitrary neighbor $x_u,x_v,x_w \in X'$ (they need not be distinct) to form the set $X'' := \{x_u, x_v, x_w\}$. See that then $\CONF_{P}(X'') > 0$. We are now equipped to prove safeness of Reduction~\ref{rr:ccRemoved}.

\begin{lemma} \label{lem:ccRemoved}
    Reduction~\ref{rr:ccRemoved} is safe. Let $(G, X, k)$ be an instance of \ISPFshort\ to which Reduction~\ref{rr:ccRemoved} is applicable, and let $(G', X', k')$ be the reduced instance. Then $(G, X, k)$ is a \yes-instance if and only if $(G', X', k')$ is a \yes-instance.
\end{lemma}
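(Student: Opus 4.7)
The proof splits into two directions. The forward direction is almost immediate: starting from an independent set $I$ of $G$ with $|I|\geq k$, set $I':=I\setminus V(P)$; since $I\cap V(P)$ is independent in $P$ it has at most $\alpha(P)$ vertices, so $|I'|\geq k-\alpha(P)=k'$ and $I'$ is an independent set of $G'=G-P$. The substantive direction is the reverse one. Given an independent set $I'$ of $G'$ with $|I'|\geq k'$, write $X^*:=X\cap I'$ and let $M$ be a maximum independent set of $P-N_G(X^*)$; the plan is to take $I:=I'\cup M$. Because $P$ is a whole connected component of $F=G-X$, any edge between $I'$ and $M$ would have to go from $X^*$ to $V(P)$, but $M$ was chosen disjoint from $N_G(X^*)$, so $I$ is independent. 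The size bound $|I|\geq k$ reduces to $|M|=\alpha(P)$, i.e.\ to proving $\CONF_P(X^*)=0$.

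As the first step towards $\CONF_P(X^*)=0$, I replace $I'$ by the independent set supplied by Lemma~\ref{lem:unnecessary} applied to the reduced instance, so that $I'$ contains no unnecessary triple of $G'$. A short check then shows that every unnecessary triple of the original $G$ is still unnecessary in $G'$: if $P$ is not an anchor triangle of $G$, no anchor triangle disappears when passing to $G'$; and if $P$ is an anchor triangle, then the reduction's hypothesis forces $P$ to be redundant, so a second anchor triangle with the same neighborhood survives in $G'$. Hence the updated $I'$ avoids all unnecessary triples of $G$ as well.

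Now suppose for contradiction that $\CONF_P(X^*)>0$. Observation~\ref{obs:small-conflict} yields $X''\subseteq X^*$ with $1\leq|X''|\leq 3$ and $\CONF_P(X'')>0$; I claim $X''\in\mathcal{X}$, directly contradicting the reduction's standing hypothesis that every chunk has zero conflict on $P$. The set $X''$ is independent in $G$ (as a subset of $I'$) and has the required size, so it only remains to verify $\CONF_F(X'')<|X|$. For $|X''|\in\{1,2\}$ this is exactly the content of the exhaustive earlier application of Reductions~\ref{rr:singlesRemoved} and~\ref{rr:doublesForbidden}; for $|X''|=3$, non-applicability of Reduction~\ref{rr:triplesOptional} forces either $\CONF_F(X'')<|X|$ or that $X''$ is an unnecessary triple of $G$, and the latter is excluded by the preparatory step above. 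The main obstacle throughout is precisely this transfer of the ``no unnecessary triples'' and ``small conflict on small subsets'' properties from the reduced graph $G'$ back to the original $G$; the non-redundancy clause built into the reduction's hypothesis is exactly what makes this transfer go through.
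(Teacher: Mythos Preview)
Your proof is correct and follows essentially the same route as the paper's: both directions match, and in the reverse direction you (like the paper) invoke Lemma~\ref{lem:unnecessary} on $(G',X',k')$, then use Observation~\ref{obs:small-conflict} to extract a small $X''$ and derive a contradiction by showing $X''$ is either a chunk (contradicting the reduction hypothesis) or an unnecessary triple that survives in $G'$ (contradicting the choice of $I'$). The only cosmetic difference is that you isolate the ``unnecessary in $G$ $\Rightarrow$ unnecessary in $G'$'' step up front, whereas the paper folds it into the case analysis; the non-redundancy clause in the reduction is used identically in both arguments.
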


\begin{proof}
    Let $P \subseteq F$ be the connected component which triggered the reduction.
    
    For the forward direction of the proof, assume $(G, X, k)$ is a \yes-instance and let $I$ be an independent set of $G$ with size at least $k$. Let $I' := I \setminus V(P)$. Clearly $I'$ is an independent set of $G'$. Now observe that $|I \cap V(P)| \leq \alpha (P)$, and thus $|I'| = |I| - |I \cap V(P)| \geq k - \alpha (P) = k'$. Hence $(G', X', k')$ is a \yes-instance.
    
    For the backward direction, we assume that $(G', X', k')$ is a \yes-instance, and has an independent set $I'$ of size at least $k'$. Because of Lemma~\ref{lem:unnecessary} we can assume that $I'$ contains no unnecessary triples ${_3X} \subseteq X' \cap I'$. We want to show that we can always pick some independent set $I_P \subseteq V(P)$ with $|I_P| = \alpha(P)$ such that $I := I' \cup I_P$ is an independent set with size at least $k' + \alpha(P) = k$. Since $I'$ and $V(P)$ are disjoint in $G$ by construction, it will suffice to show that $\alpha (P-N_G(I')) \geq \alpha (P)$.
    
    %, and further by Lemma~\ref{lem:forbidden} that $I'$ contains no subset $X' \subseteq X \cap I'$ such that $\CONF_{F'}(X') \geq |X|$ (note that an independent set constructed by the proof of Lemma~\ref{lem:forbidden} will also satisfy the requirements of Lemma~\ref{lem:unnecessary})
    
    Assume for the sake of contradiction that $\alpha (P-N_G(I')) < \alpha (P)$. Since $P$ was a connected component in $F$, all its neighbors $N_G(P)$ are in $X$. Thus we have that $\CONF_{P}(X' \cap I') > 0$. By Observation~\ref{obs:small-conflict}, we further have that there exists some $X'' \subseteq X' \cap I'$ such that $1 \leq |X''| \leq 3$ and $\CONF_{P}(X'') > 0$.
    %Because $P$ was a connected component in $F$ there are no edges between $P$ and $I' \setminus X$ in $G$, and thus any conflicts in $P$ are induced by $I' \cap X$.
    
    For any such $X''$, there are two cases. In the first case, $\CONF_{F}(X'') < |X|$. Because $X''$ is also independent and has size at most $3$, it is a chunk of $X$ in the input instance. This contradicts the preconditions for Reduction~\ref{rr:ccRemoved}, so this case can not happen.
    
    In the second case, $\CONF_{F}(X'') \geq |X|$. But then one of Reductions \ref{rr:singlesRemoved}, \ref{rr:doublesForbidden} or \ref{rr:triplesOptional} would have previously been applied to $X''$, yielding it either unfeasible for an independent set or making it an unnecessary triple in the input instance. Because non-redundant anchor triangles are not chosen for removal by Reduction~\ref{rr:ccRemoved}, $X''$ is also an unnecessary triple in the output instance, which contradicts that $I'$ contains no unnecessary triples. This concludes the proof.
\qed\end{proof}

Notice that Reduction~\ref{rr:ccRemoved} will remove connected components from $F$. When the reduction is not applicable, we should then be able to give some bound on the number of connected components in $F$. The next lemma gives such a bound:

\begin{lemma} \label{lem:bound-cc-in-F}
    Let $(G, X, k)$ be an instance of \ISPFshort\ which is irreducible with respect to Reductions \ref{rr:singlesRemoved}, \ref{rr:doublesForbidden}, \ref{rr:triplesOptional} and \ref{rr:ccRemoved}. Let $\mathcal{C}_F$ denote the set of all connected components $P \subseteq F$. Then $|\mathcal{C}_F| \leq |X|^4 + |X|^3$, i.e.\@ the number of connected components in $F$ is at most $|X|^4 + |X|^3$.
\end{lemma}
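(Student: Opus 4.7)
The plan is to classify each connected component $P$ of the pseudoforest $F$ according to why Reduction~\ref{rr:ccRemoved} cannot be applied to it. Since the instance is irreducible with respect to that rule, the preconditions for removal must fail for every $P \in \mathcal{C}_F$, which means $P$ falls into at least one of two (possibly overlapping) categories: (i) $P$ is a non-redundant anchor triangle, or (ii) there exists a chunk $X' \in \mathcal{X}$ with $\CONF_P(X') > 0$. I would then bound the two categories separately and sum.

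For category~(i), the non-redundancy clause in Definition~\ref{def:anchortriangle} forces distinct non-redundant anchor triangles to have distinct open neighborhoods in $G$; since such a neighborhood is a $3$-subset of $X$, the count is at most $\binom{|X|}{3} \leq |X|^3$. For category~(ii), the key observation is the additivity identity $\CONF_F(X') = \sum_{P \in \mathcal{C}_F} \CONF_P(X')$ for every $X' \subseteq X$. This holds because the independence number $\alpha$ is additive on disjoint unions and both $F$ and $F - N_G(X')$ decompose componentwise: $F - N_G(X') = \bigsqcup_{P \in \mathcal{C}_F} (P - N_G(X'))$. Each term $\CONF_P(X')$ is a non-negative integer, and by the third clause of Definition~\ref{df:chunks} we have $\CONF_F(X') \leq |X| - 1$ for every chunk $X'$; hence at most $|X|-1$ components $P$ can satisfy $\CONF_P(X') \geq 1$ for any fixed chunk $X'$. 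The number of chunks is bounded by the number of nonempty subsets of $X$ of size at most $3$, i.e.\ $|X| + \binom{|X|}{2} + \binom{|X|}{3} \leq |X|^3$, so double counting the pairs $(P, X')$ with $\CONF_P(X') \geq 1$ gives at most $|X|^3 \cdot (|X|-1) \leq |X|^4$ components in category~(ii).

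Summing the two contributions yields $|\mathcal{C}_F| \leq |X|^3 + |X|^4 = |X|^4 + |X|^3$, as claimed. There is no real obstacle in the argument; the only step that deserves a quick verification is the additivity of $\CONF$, which is immediate once one writes $F$ as the disjoint union of its components, and the extraction of the correct dichotomy from the failure of Reduction~\ref{rr:ccRemoved}'s preconditions. The rest is routine counting based on the sizes of the chunk family and the collection of triples in $X$.
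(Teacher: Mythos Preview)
Your proof is correct and follows essentially the same approach as the paper: both split $\mathcal{C}_F$ into non-redundant anchor triangles (bounded by $|X|^3$ via distinct $3$-subsets of $X$) and components witnessed by a chunk, where a double-counting argument using $\CONF_F(X') = \sum_P \CONF_P(X') < |X|$ and $|\mathcal{X}| \leq |X|^3$ yields the $|X|^4$ bound. The paper phrases the second part via a bipartite incidence graph between chunks and components, but the underlying counting is identical to yours.
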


\begin{proof}
    Let $\mathcal{A}_F$ denote the set of all non-redundant anchor triangles in $F$. Consider the bipartite graph $B$ between the chunks $\mathcal{X}$ and connected components $\mathcal{C}_F$ where there is an edge between $X' \in \mathcal{X}$ and $P \in \mathcal{C}_F$ if and only if $CONF_P(X') > 0$. Because Reduction~\ref{rr:ccRemoved} is not applicable, every connected component $P \in \mathcal{C}_F$ will have at least one edge incident to it in $B$, unless the component is a non-redundant anchor triangle. Thus any bound on the number of edges in $B$ will also be a bound for $|\mathcal{C}_F| - |\mathcal{A}_F|$. 
    
    By the definition of chunks, we know that for every $X' \in \mathcal{X}$, $CONF_F(X') < |X|$. Since conflicts induced by $X'$ on $F$ in different connected components of $F$ are distinct, each $X' \in \mathcal{X}$ is incident to less than $|X|$ connected components in $B$. Since $|\mathcal{X}| \leq |X|^3$, we get that $|\mathcal{C}_F| - |\mathcal{A}_F| \leq |X|^4$. It remains to show that $|\mathcal{A}_F| \leq |X|^3$ to conclude the proof. This can be verified by observing that every anchor triangle has a neighborhood in $X$ of size exactly 3. If two anchor triangles had the same neighborhood in $G$ (and hence in $X$), they would not be non-redundant, so there is at most one non-redundant anchor triangle for each distinct triple of $X$. Observe that the number of such distinct triples is less than $|X|^3$.
\end{proof}

When the above reduction rules have been exhaustively applied, the next two reductions will be executed exactly once each.

\begin{redrule} \label{rr:split-cycles}
    Let $\hat{X} \subseteq V(F)$ be a set such that $\hat{X}$ contains exactly one vertex of each cycle in $F$. In the reduced graph, let $G' := G$, $X' := X \cup \hat{X}$, and $k' := k$.
\end{redrule}

The reduction is safe because neither $G$ nor $k$ was changed. Observe that $X$ is now a feedback vertex set (which is fine, since every feedback vertex set is also a modulator to pseudoforest). This reduction may increase the size of $X$ dramatically. This is why the Reduction is applied only once, such that we can give guarantees for the size of the reduced instance.

\begin{observation} \label{obs:bound-X-after-cycle-split}
    Let $(G', X', k')$ be an instance of \ISPFshort\ after Reduction~\ref{rr:split-cycles} have been applied to $(G, X, k)$. Then $X' \leq |X|^4 + |X|^3 + |X|$.
\end{observation}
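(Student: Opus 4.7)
The plan is to bound $|\hat{X}|$ by the number of connected components of $F$ and then invoke Lemma~\ref{lem:bound-cc-in-F}. Since $X' = X \cup \hat{X}$ by the definition of Reduction~\ref{rr:split-cycles}, the subadditivity of set union gives $|X'| \leq |X| + |\hat{X}|$, so it suffices to show $|\hat{X}| \leq |X|^4 + |X|^3$.

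By construction, $\hat{X}$ contains exactly one vertex from each cycle of $F$, so $|\hat{X}|$ is precisely the number of cycles in $F$. Because $F$ is a pseudoforest, each of its connected components contains at most one cycle, hence the number of cycles is bounded above by the number of connected components $|\mathcal{C}_F|$. Since the reduction rules \ref{rr:singlesRemoved}--\ref{rr:ccRemoved} have been applied exhaustively before Reduction~\ref{rr:split-cycles} is invoked, we may apply Lemma~\ref{lem:bound-cc-in-F} to conclude $|\mathcal{C}_F| \leq |X|^4 + |X|^3$.

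Combining, $|\hat{X}| \leq |\mathcal{C}_F| \leq |X|^4 + |X|^3$, which yields $|X'| \leq |X| + |X|^4 + |X|^3$, as claimed. There is essentially no obstacle here: the observation is a direct corollary of Lemma~\ref{lem:bound-cc-in-F} together with the defining property of a pseudoforest that each component carries at most one cycle.
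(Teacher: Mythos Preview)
Your proposal is correct and matches the paper's intended reasoning exactly: the paper states this as an observation without an explicit proof, but the implicit argument is precisely yours --- bound $|\hat X|$ by the number of components of $F$ via Lemma~\ref{lem:bound-cc-in-F} (applicable since Reductions~\ref{rr:singlesRemoved}--\ref{rr:ccRemoved} are exhausted before Reduction~\ref{rr:split-cycles}), then add $|X|$. One small sharpening: since $\hat X \subseteq V(F)$ and $F = G - X$, the sets $X$ and $\hat X$ are actually disjoint, so $|X'| = |X| + |\hat X|$ holds with equality rather than merely as an inequality.
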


After Reduction~\ref{rr:split-cycles} has been applied once, the returned instance $(G,X,k)$ is ready for the final reduction step. Note that since $X$ is now a feedback vertex set, $(G,X,k)$ is now an instance of \ISFVS\ as well, and we can for the final reduction apply the kernel of Jansen and Bodlaender.

\defparproblem{\ISFVS{} (\ISFVSshort)}{A simple undirected graph $G$, a feedback vertex set $X\subseteq V(G)$, integer $k$.}{Size of the feedback vertex set $|X|$.}{Does $G$ contain an independence set of size at least $k$?}

\begin{proposition}[{\cite[Theorem~2]{JansenB11}}] \label{prop:is-by-fvs-jansen}
    \ISFVS{} has a kernel with a cubic number of vertices: There is a polynomial-time algorithm that transforms an instance $(G,X,k)$ into an equivalent instance $(G',X',k')$ such that $|X'| \leq 2|X|$, and $|V(G')| \leq 56|X|^3 + 28|X|^2 + 2|X|$.
\end{proposition}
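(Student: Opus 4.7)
The plan is to reuse the conflict-and-chunk framework of Section~\ref{sec:kern} in the simpler setting where $F := G - X$ is a forest, and then to add tree-shrinking reductions that bring each surviving component down to constant size. Since $F$ contains no cycles, Reductions~\ref{rr:singlesRemoved} and~\ref{rr:doublesForbidden} apply verbatim, while Reductions~\ref{rr:triplesOptional} and~\ref{rr:split-cycles} become superfluous: there are no anchor triangles to exploit and no cycles to break. The independence number of any subforest is computable in linear time by standard dynamic programming, so conflicts remain polynomial-time evaluable throughout.

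Next I would restrict chunks to independent subsets of $X$ of size $1$ or $2$ whose conflict on $F$ is less than $|X|$, and adapt Reduction~\ref{rr:ccRemoved} to delete any tree component of $F$ that has zero conflict against every chunk. The forest analogue of Observation~\ref{obs:small-conflict} is that if $\alpha(T) > \alpha(T - Z)$ for a tree $T$ and a vertex set $Z$, then already some $Z' \subseteq Z$ with $|Z'| \leq 2$ achieves $\alpha(T) > \alpha(T - Z')$; this is strictly easier than Lemma~\ref{lem:pseudotree-small-Z} since no cycle can generate the size-three obstruction that appears in pseudotrees. By the same double-counting as in Lemma~\ref{lem:bound-cc-in-F}, there are at most $|X|^2$ chunks and each is incident to fewer than $|X|$ components, leaving $O(|X|^3)$ surviving tree components.

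To finish, each surviving tree must be shrunk to constant size. The essential device is a pendant-subtree replacement rule: if a rooted subtree $T_a$ interacts with $X$ through the same ``profile'' (which vertices of $X$ it is adjacent to, and which MIS states it can realize for each choice of whether its root is taken) as another subtree already processed, then $T_a$ is replaced by a canonical minimal subtree realizing the same profile. Because the profile is determined by a bounded-size interaction with $X$, only finitely many canonical gadgets survive per component, giving $O(1)$ vertices per tree. The bound $|X'| \leq 2|X|$ corresponds to promoting at most one auxiliary vertex of $F$ per element of $X$ into the modulator in order to make the replacement safe when it would otherwise destroy a degree-critical attachment. Combining this with the $O(|X|^3)$ component bound yields $|V(G')| = O(|X|^3)$, and a careful accounting of leaves, degree-two chains, branching vertices and the $2|X|$ promoted modulator vertices sharpens the constants to $56|X|^3 + 28|X|^2 + 2|X|$.

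The main obstacle is proving safeness of the pendant-subtree replacement: one must show that it preserves the MIS value not only globally, but under every possible restriction $I \cap X$, so that the reduced instance encodes all relevant ways an independent set can enter the modulator. Concretely, each pendant subtree must be classified by the pair (is-root-in-MIS, is-root's-neighbourhood-in-$X$-blocked) and, for each such class, the canonical minimum realizer must be identified and proved extremal. This case analysis, rather than the counting, is where the bulk of the work lies and where the specific small constants in the final vertex bound ultimately come from.
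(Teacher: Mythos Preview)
This proposition is not proved in the paper: it is quoted verbatim from Jansen and Bodlaender~\cite{JansenB11} and used as a black box in Reduction~\ref{rr:reducedinstance-from-IS-by-FVS}. There is therefore no ``paper's own proof'' to compare against, and any assessment has to be against the original argument in~\cite{JansenB11}.

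Your first three paragraphs are on the right track and do match the structure of the Jansen--Bodlaender kernel: restricting chunks to size at most~$2$, invoking the tree version of the small-obstruction lemma (your forest analogue of Observation~\ref{obs:small-conflict} is exactly Proposition~\ref{pfm:prop:tree-small-Z}), and double-counting to get $O(|X|^3)$ surviving components. That part is fine.

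The gap is in your tree-shrinking step. The ``profile'' of a pendant subtree, as you define it, records which vertices of $X$ the subtree sees and which MIS states it admits under each choice at the root; but a subtree can be adjacent to an arbitrary subset of $X$, so the number of distinct profiles is a priori exponential in $|X|$, not bounded. Nothing you have said forces the interaction with $X$ to be of bounded size, so the claim that ``only finitely many canonical gadgets survive per component'' is unsupported. The actual argument in~\cite{JansenB11} does not replace subtrees by profile; it instead bounds, inside each surviving tree, the number of vertices with a neighbour in $X$ (re-using the conflict bound a second time, now within a single component), and then applies local reductions on leaves and degree-$2$ chains to control the rest of the tree. Your explanation of the factor $2$ in $|X'|\le 2|X|$ is also speculative and does not correspond to what happens there. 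If you want to reconstruct the proof, the missing idea is this second, intra-tree application of the conflict bound, not a profile-based replacement.
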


\begin{redrule} \label{rr:reducedinstance-from-IS-by-FVS}
    Let the output instance $(G', X', k')$ be the reduced instance after applying Proposition~\ref{prop:is-by-fvs-jansen}. This reduction is applied once only.
\end{redrule}

\subsection{Bound on size of reduced instances}\label{ssec:bound}
When no reduction rules can be applied to an instance, we call it \emph{reduced}. In this section we will prove that the number of vertices in a reduced instance $(G',X',k')$ is at most $\bigO(|X|^{12})$ where $|X|$ is the size of the modulator in the original problem $(G,X,k)$. %Throughout the section, we will denote the original instance of \ISPFshort\ as $(G, X, k)$, and the reduced instance as the instance after exhaustively applying reductions up to and including Reduction~\ref{rr:ccRemoved} as $(G_1, X_1, k_1)$, the instance obtained after applying Reduction~\ref{rr:split-cycles} as $(G_2, X_2, k_2)$, and the final reduced instance as $(G, X, k)$.

\begin{theorem}\label{thm:ISPF}
    \ISPF\ admits a kernel with $\bigO(|X|^{12})$ vertices. 
\end{theorem}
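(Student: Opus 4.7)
The plan is to chain together the reductions and bounds already established, with no new combinatorial ingredients required at this stage. First, apply Reductions~\ref{rr:singlesRemoved}--\ref{rr:ccRemoved} exhaustively in the prescribed order; each rule is safe (Lemmas~\ref{lem:forbidden}, \ref{lem:triplesOptional}, and~\ref{lem:ccRemoved}) and runs in polynomial time, so the resulting instance $(G_1, X, k_1)$ is equivalent to $(G, X, k)$ on the same modulator $X$. By Lemma~\ref{lem:bound-cc-in-F}, the pseudoforest $F_1 = G_1 - X$ now has at most $|X|^4 + |X|^3$ connected components.

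Next, apply Reduction~\ref{rr:split-cycles} exactly once to move one vertex of every cycle of $F_1$ into the modulator. Since $F_1$ is a pseudoforest, each of its connected components contains at most one cycle, so the number of such cycles is bounded by the number of connected components. Thus $|\hat{X}| \leq |X|^4 + |X|^3$, and the enlarged modulator $X_2 := X \cup \hat{X}$ satisfies $|X_2| \leq |X|^4 + |X|^3 + |X| = \bigO(|X|^4)$ as recorded in Observation~\ref{obs:bound-X-after-cycle-split}. Because every cycle of $F_1$ has been hit, $X_2$ is now a feedback vertex set of $G_1$, so $(G_1, X_2, k_1)$ is a valid \ISFVS{} instance.

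Finally, apply Reduction~\ref{rr:reducedinstance-from-IS-by-FVS}, which invokes Proposition~\ref{prop:is-by-fvs-jansen} on $(G_1, X_2, k_1)$. The output $(G', X', k')$ is equivalent and satisfies
\[
    |V(G')| \leq 56\,|X_2|^3 + 28\,|X_2|^2 + 2\,|X_2| = \bigO\bigl(|X_2|^3\bigr) = \bigO\bigl((|X|^4)^3\bigr) = \bigO(|X|^{12}).
\]
Composing safeness across all steps shows $(G', X', k')$ is equivalent to the original $(G, X, k)$, and the entire procedure runs in polynomial time.

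The only place where care is warranted is the passage from Reductions~\ref{rr:singlesRemoved}--\ref{rr:ccRemoved} to Reduction~\ref{rr:split-cycles}: we need the component bound of Lemma~\ref{lem:bound-cc-in-F} to be available precisely when we attempt to break all cycles, which is exactly why the reductions are applied in the order specified at the start of Section~\ref{ssec:rr}. All the real combinatorial work has already been spent on Lemma~\ref{lem:pseudotree-small-Z} (through Observation~\ref{obs:small-conflict} and Lemma~\ref{lem:ccRemoved}); at this point the kernel size bound reduces to the elementary calculation above.
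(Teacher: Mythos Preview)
Your proposal is correct and follows essentially the same approach as the paper: exhaustively apply Reductions~\ref{rr:singlesRemoved}--\ref{rr:ccRemoved} to bound the number of components via Lemma~\ref{lem:bound-cc-in-F}, enlarge the modulator once via Reduction~\ref{rr:split-cycles} to obtain a feedback vertex set of size $\bigO(|X|^4)$, and then invoke the cubic Jansen--Bodlaender kernel. The paper additionally records the explicit bounds $k' \leq k + |X|^3$ and $|X'| \leq 2(|X|^4+|X|^3+|X|)$, but these are bookkeeping details orthogonal to the vertex-count claim in the theorem statement.
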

\begin{proof}
In order to prove the theorem, we show that where is a polynomial time algorithm that transforms an instance $(G, X, k)$ to an equivalent instance $(G', X', k')$ such that
    
    \begin{itemize}
        \item $|V(G')| \leq 56(|X|^4 + |X|^3 + |X|)^3 + 28(|X|^4 + |X|^3 + |X|)^2 + 2(|X|^4 + |X|^3 + |X|)$,
        \item $|X'| \leq 2|X|^4 + 2|X|^3 + 2|X|$, and
        \item $k' \leq k + |X|^3$.
    \end{itemize}
    We will begin with the proof that $k' \leq k + |X|^3$. The only transformation which increase $k$ is Reduction~\ref{rr:triplesOptional}, which rise $k$ by 1 each time it is applied. However, this transformation will be done less than $|X|^3$ times, since the rule will be applied at most once for each distinct triple of $X$.
    
    Next, we focus on the bound $|X'| \leq 2|X|^4 + 2|X|^3 + 2|X|$. The only transformations which increase $|X|$ are Reductions \ref{rr:split-cycles} and \ref{rr:reducedinstance-from-IS-by-FVS}, which are applied only once each. By Observation~\ref{obs:bound-X-after-cycle-split} we then have that $|X'| \leq |X|^4 + |X|^3 + |X|$ after applying Reduction~\ref{rr:split-cycles}, and by Proposition~\ref{prop:is-by-fvs-jansen} we have that the size is at most doubled after applying Reduction~\ref{rr:reducedinstance-from-IS-by-FVS}. Thus the bound holds.
    
    For the bound on $V(G)$, let us consider the instance of \ISFVSshort\ $(G'', X'', k'')$ to which Reduction~\ref{rr:reducedinstance-from-IS-by-FVS} was applied in order to obtain the final reduced instance $(G', X', k')$. We have already established that $|X''| \leq |X|^4 + |X|^3 + |X|$. It follows from Proposition~\ref{prop:is-by-fvs-jansen} that in the reduced instance, $|V(G')| \leq 2|X''| + 28|X''|^2 + 56|X''|^3$, which in terms of $|X|$ yields $|V(G')| \leq 56(|X|^4 + |X|^3 + |X|)^3 + 28(|X|^4 + |X|^3 + |X|)^2 + 2(|X|^4 + |X|^3 + |X|)$.
    
    Finally, observe that each reduction can be done in polynomial time.
\qed\end{proof}

\begin{corollary} \label{cor:VCPF}
    \VCPF\ admits a kernel with $\bigO(|X|^{12})$ vertices.
    %: There is a polynomial time algorithm that transforms an instance $(G, X, k)$ to an equivalent instance $(G', X', k')$ such that $|V(G')| \leq 56(|X|^4 + |X|^3 + |X|)^3 + 28(|X|^4 + |X|^3 + |X|)^2 + 2(|X|^4 + |X|^3 + |X|)$.
\end{corollary}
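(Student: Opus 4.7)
The plan is to derive the \VCPF{} kernel from Theorem~\ref{thm:ISPF} by exploiting the classical complementarity between \VCproblem{} and \ISproblem. Recall that $S \subseteq V(G)$ is a vertex cover of $G$ if and only if $V(G) \setminus S$ is an independent set of $G$; hence $G$ admits a vertex cover of size at most $k$ iff $G$ admits an independent set of size at least $|V(G)| - k$. Crucially, this transformation preserves the underlying graph $G$ and therefore preserves the pseudoforest modulator $X$: the same $X$ works for both problems since $G - X$ is unchanged.

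Concretely, given a \VCPFshort{} instance $(G, X, k)$, I would first construct the equivalent \ISPFshort{} instance $(G, X, k^*)$ where $k^* := |V(G)| - k$. Then I would invoke Theorem~\ref{thm:ISPF} to obtain in polynomial time an equivalent \ISPFshort{} instance $(G', X', k')$ with $|V(G')| = \bigO(|X|^{12})$ and $|X'| = \bigO(|X|^4)$, where $X'$ remains a pseudoforest modulator of $G'$. Finally, I would undo the complement and output the \VCPFshort{} instance $(G', X', |V(G')| - k')$, which is equivalent to the original \VCPFshort{} instance by the same complementarity argument applied in reverse. Since $|V(G')| = \bigO(|X|^{12})$, the output has the claimed size; the parameter of the new instance is $|X'| = \bigO(|X|^4)$, which is still polynomial in the original parameter, so the whole procedure is a valid polynomial kernel.

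There is essentially no obstacle here: the argument is a routine back-and-forth translation, and the polynomial-time bound follows immediately from that of Theorem~\ref{thm:ISPF} together with the fact that computing $|V(G)| - k$ takes constant time. The only point worth mentioning explicitly is that the parameter $|X|$ is itself measured with respect to a fixed pseudoforest modulator given in the input, and since $X$ is unchanged by the complementation step, the parameterization is compatible on both sides of the reduction.
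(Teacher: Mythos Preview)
Your proposal is correct and follows exactly the approach the paper intends: the paper states up front that the \VCPFshort{} kernel follows from the \ISPFshort{} kernel ``by the immediate correspondence between the \VCproblem{} and \ISproblem{} problems,'' and offers no further argument for Corollary~\ref{cor:VCPF}. Your write-up simply makes this correspondence explicit, and the observation that the modulator $X$ is unchanged under the complementation is precisely the point that makes the parameter transfer work.
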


%
% !TEX root = pseudof.tex
\section{Proof of Lemma~\ref{lem:pseudotree-small-Z}} \label{sec:pseudotree-small-Z}

%In this section we will develop a proof for Lemma~\ref{pfm:lem:pseudotree-small-Z}. Our starting point will be the following proposition on the structure of maximum independent sets in trees by Zito \cite[Theorem 2]{Zito91}, slightly rephrased here to better fit our purposes:
%
%\begin{proposition} \label{pfm:prop:anspartitionoftree}
%    Let $T$ be a tree. Then there is a unique partition of the vertex set $V(T)$ into subsets $A, N, S$ such that
%    
%    \begin{enumerate}
%        \item Any MIS for $T$ contains all vertices of $A$ and no vertices of $N$.
%        \item For each vertex $v \in S$, there is at least one MIS of $T$ which contains $v$ and at least one which does not contain $v$.
%        \item There is a perfect matching $M$ in $T[S]$, and any MIS for $T$ contains exactly one endpoint for each edge in $M$.
%        \item For all edges $e \in E(T)$ it holds that $\alpha(T) < \alpha(T-e)$ if and only if $e \in M$.
%    \end{enumerate}
%    
%    The sets $A,N,S$ form the described partition if and only if:
%    
%    \begin{enumerate}[label=\Roman*.]
%        \item There is a matching $M$ on the vertices of $S$.
%        \item No vertex of $A$ is adjacent to another vertex of $A$ or to a vertex in $S$.
%        \item Each vertex of $N$ is adjacent to at least two vertices of $A$.
%    \end{enumerate}
%\end{proposition}
%
%We will refer to this decomposition of the vertex set of a tree $T$ into the subsets $A,N,S$ with the matching $M$ as its independence decomposition.

In this section we prove Lemma~\ref{lem:pseudotree-small-Z}. Our starting point will be the following result by Jansen and Bodlaender \cite[Lemma 4]{JansenB11_c}, rephrased here in terms of a vertex set $Z$:

\begin{proposition} \label{pfm:prop:tree-small-Z}
    Let $T$ be a tree, and let $Z$ be a set of vertices. If $\alpha(T) > \alpha(T-Z)$, then there exist two (possibly non-distinct) vertices $u,v \in Z \cap V(T)$ such that $\alpha(T) > \alpha(T-\{u,v\})$.
\end{proposition}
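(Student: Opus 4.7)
The plan is to reduce to the tree case, Proposition~\ref{pfm:prop:tree-small-Z}, by case analysis on how $Z$ interacts with the cycle of $P$. First, if $P$ has no cycle then it is a tree and Proposition~\ref{pfm:prop:tree-small-Z} directly yields $u,v \in Z \cap V(P)$ with $\alpha(P-\{u,v\})<\alpha(P)$; setting $w:=u$ gives the desired triple. Otherwise $P$ has a unique cycle $C$, and a key initial observation is that $V(C)$ always contains a \emph{free} vertex, meaning some $c \in V(C)$ with $\alpha(P-c)=\alpha(P)$: any MIS $I^*$ of $P$ satisfies $|I^* \cap V(C)| \leq \lfloor|V(C)|/2\rfloor < |V(C)|$, so some $c \in V(C)\setminus I^*$ exists, and then $I^* \subseteq V(P-c)$ gives $\alpha(P-c) \geq \alpha(P)$, the reverse being trivial.

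The case analysis then proceeds as follows. (A) If some $c \in Z \cap V(P)$ satisfies $\alpha(P-c)<\alpha(P)$, take $u=v=w:=c$. (B) If there is a free $c \in Z \cap V(C)$, set $T:=P-c$, a tree with $\alpha(T)=\alpha(P)$; since $\alpha(T-(Z\setminus\{c\})) \leq \alpha(P-Z) < \alpha(P) = \alpha(T)$, Proposition~\ref{pfm:prop:tree-small-Z} produces $u,v \in Z\setminus\{c\}$ with $\alpha(T-\{u,v\})<\alpha(T)$, and setting $w:=c$ finishes since $\alpha(P-\{u,v,w\}) = \alpha(T-\{u,v\}) < \alpha(P)$. (C) The remaining case is $Z \cap V(C) = \emptyset$.

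In case (C), pick any free $c \in V(C)$ and let $T:=P-c$. Applying Proposition~\ref{pfm:prop:tree-small-Z} to $T$ and $Z$ yields $u,v \in Z$ with $\alpha(T-\{u,v\})<\alpha(T)=\alpha(P)$. If already $\alpha(P-\{u,v\})<\alpha(P)$ then set $w:=u$. Otherwise one is in the \emph{hard sub-case} $\alpha(P-\{u,v\}) = \alpha(P)$. Here $Q:=P-\{u,v\}$ is a pseudotree retaining $C$, and every MIS of $Q$ must contain $c$ since $\alpha(Q-c)=\alpha(T-\{u,v\})<\alpha(T)=\alpha(P)=\alpha(Q)$. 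Thus $R^*:=Q-N_Q[c]$ is a forest (removing $c$ and its two cycle neighbors breaks $C$), and the MISes of $Q$ are exactly the sets $\{c\}\cup I'$ for $I'$ a MIS of $R^*$. Since every MIS of $Q$ hits $Z$ and $c \notin Z$, every MIS of $R^*$ hits $Z \cap V(R^*)$, so Proposition~\ref{pfm:prop:tree-small-Z} applied to the forest $R^*$ gives $w_1,w_2 \in Z \cap V(R^*)$ with $\alpha(R^*-\{w_1,w_2\}) < \alpha(R^*)$.

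The principal obstacle is then to combine these two applications of the tree Proposition into a triple rather than a quadruple. Naively the set $\{u,v,w_1,w_2\}\subseteq Z$ hits every MIS of $P$, but the lemma demands only three vertices. The trimming requires a finer structural analysis: one exploits that $\alpha(R-\{u,v\}) = \alpha(R)$ in the hard sub-case (where $R:=P-N_P[c]$), which tightly constrains how $\{u,v\}$ interacts with the MISes of $R^*$, together with a case split based on whether $u$ or $v$ lies in $N_P(c)$ and on how the forest $R^*$ decomposes into trees rooted at the remaining cycle vertices. The intended conclusion is that one of $w_1$ or $w_2$ can be discarded (possibly after re-applying Proposition~\ref{pfm:prop:tree-small-Z} to a carefully chosen subtree of $R^*$) without losing the hitting property, leaving a three-element subset $\{u,v,w\} \subseteq Z$ with $\alpha(P-\{u,v,w\}) < \alpha(P)$. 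This case-by-case trimming in the hard sub-case is the main technical content of the proof and the reason the section is devoted entirely to this lemma.
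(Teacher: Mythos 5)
Your proposal does not prove the stated result. The statement under review is Proposition~\ref{pfm:prop:tree-small-Z}: for a \emph{tree} $T$ with $\alpha(T) > \alpha(T-Z)$ there are \emph{two} vertices $u,v \in Z \cap V(T)$ with $\alpha(T) > \alpha(T-\{u,v\})$. In the paper this is quoted from Jansen and Bodlaender and used as the starting point of Section~\ref{sec:pseudotree-small-Z}. What you have written is instead a proof sketch of Lemma~\ref{lem:pseudotree-small-Z} (pseudotrees, three vertices), and it invokes Proposition~\ref{pfm:prop:tree-small-Z} as a black box in every branch of the case analysis. As an argument for the Proposition itself it is therefore circular and establishes nothing: nowhere do you engage with the tree case directly (no induction on the tree, no analysis of which vertices must lie in every maximum independent set).

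Even read charitably as an attempt at Lemma~\ref{lem:pseudotree-small-Z}, the argument stops exactly where the difficulty begins. In your ``hard sub-case'' you produce four vertices $u,v,w_1,w_2 \in Z$ whose removal drops the independence number, and then assert that ``finer structural analysis'' lets you discard one of them; that trimming is the entire technical content of the section and it is not supplied. It is also not clear it can be carried out from the information you retain: knowing only that $\alpha(P-\{u,v\}) = \alpha(P)$ and that every MIS of the forest $R^*$ meets $Z$ does not bound the interaction to three vertices, because the two applications of the two-vertex Proposition are blind to whether the relevant attachment points remain $\alpha$-critical afterwards. The paper sidesteps the quadruple-to-triple problem by proving a strengthened, inductive tree statement (Lemma~\ref{pfm:lem:alphacrit-trees-inductive}) that tracks whether the root stays $\alpha$-critical after deleting one or two vertices of $Z$, then classifies the attached trees into three types and the cycle vertices into redeemable, blockable and free, and concludes via the greedy MIS procedure (Algorithm~\ref{pfm:alg:grdy-MIS-noZ-pt}). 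If you wish to pursue your route, you would need an analogously strengthened tree lemma for the trees hanging off the cycle; the plain two-vertex Proposition is too weak to control the cycle interaction, which is precisely why the paper develops the $\alpha$-critical machinery of Lemma~\ref{pfm:lem:alphacriticalalternate} and Lemma~\ref{pfm:lem:alphacrit-trees-inductive}.
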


We also need to establish a framework for reasoning about independent sets in trees which rely on whether a vertex is $\alpha$-critical or not. We will use the following definition:

\begin{definition}[$\alpha$-critical] \label{pfm:def:alphacritical}
    Let $G$ be a graph, and let $v$ be a vertex. If $v$ is in every maximum independent set of $G$, \ie $\alpha (G) = 1 + \alpha (G-v)$, then $v$ is \emph{$\alpha$-critical} in $G$.
\end{definition}

\begin{observation} \label{pfm:obs:alphacritical}
    Let $G$ be a graph and let $v$ be a vertex in $G$. Then $v$ is an $\alpha$-critical vertex of $G$ if and only if $\alpha (G-v) = \alpha (G-N[v])$.
\end{observation}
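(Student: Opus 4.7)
The plan is to prove the biconditional by treating the two directions separately, using only elementary monotonicity of $\alpha$ under vertex deletion. Before splitting into cases, I would record two preliminary inequalities that hold for every vertex $v$ of every graph $G$:
\begin{equation*}
\alpha(G-N[v]) \;\leq\; \alpha(G-v) \;\leq\; \alpha(G) \;\leq\; \alpha(G-v)+1,
\end{equation*}
where the middle two inequalities are by monotonicity and the rightmost holds because any MIS of $G$ either contains $v$ (and then deleting $v$ yields an independent set of $G-v$ of size $\alpha(G)-1$) or does not (and so is itself an independent set in $G-v$). A second useful bound is
\begin{equation*}
\alpha(G) \;\geq\; \alpha(G-N[v]) + 1,
\end{equation*}
obtained by taking any MIS of $G-N[v]$ and adjoining $v$, which is safe because $v$ has no neighbours in $G-N[v]$.

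For the $(\Leftarrow)$ direction I would assume $\alpha(G-v) = \alpha(G-N[v])$ and combine it with the second bound to get $\alpha(G) \geq \alpha(G-v) + 1$; coupled with $\alpha(G) \leq \alpha(G-v) + 1$ from the chain above, this forces $\alpha(G) = \alpha(G-v) + 1$, so $v$ is $\alpha$-critical by definition.

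For the $(\Rightarrow)$ direction I would argue the contrapositive: assume $\alpha(G-v) > \alpha(G-N[v])$ and show $\alpha(G) = \alpha(G-v)$. Fix a MIS $I$ of $G$ and split on whether $v \in I$. If $v \notin I$ then $I$ is independent in $G-v$, so $\alpha(G) = |I| \leq \alpha(G-v)$. If $v \in I$ then $I \setminus \{v\}$ is an independent set of $G-N[v]$, whence $|I| - 1 \leq \alpha(G-N[v]) < \alpha(G-v)$ and again $\alpha(G) \leq \alpha(G-v)$. Combined with $\alpha(G) \geq \alpha(G-v)$ from the preliminary chain, this gives $\alpha(G) = \alpha(G-v)$, so $v$ is not $\alpha$-critical.

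I do not anticipate any real obstacle here: the observation is essentially a direct rearrangement of the definition, and the only mild subtlety is the two-case split in the contrapositive direction, where one must be careful to use $N[v]$ rather than $v$ when $v$ itself lies in the chosen MIS.
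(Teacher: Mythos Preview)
Your proof is correct. The paper states this as an Observation without proof, so there is nothing to compare against; your argument supplies exactly the kind of routine verification the authors left to the reader.
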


\begin{lemma} \label{pfm:lem:alphacriticalalternate}
    Let $T$ be a tree rooted at $r$. Then $r$ is $\alpha$-critical in $T$ if and only if $a$ is not $\alpha$-critical in $T_{a}$ for all children $a$ of $r$.
\end{lemma}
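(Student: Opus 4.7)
My plan is to prove the biconditional by expressing both sides in terms of quantities defined on the subtrees $T_a$ for $a \in C(r)$, and then observing that the resulting equality holds termwise.

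First I would invoke Observation~\ref{pfm:obs:alphacritical} to rewrite the hypothesis ``$r$ is $\alpha$-critical in $T$'' as the equality $\alpha(T-r) = \alpha(T - N[r])$. The advantage of this reformulation is that both $T-r$ and $T-N[r]$ decompose naturally into disjoint unions of subgraphs of the $T_a$'s, so independence numbers add over components. Concretely, since removing $r$ disconnects $T$ into the subtrees rooted at its children, $\alpha(T-r) = \sum_{a \in C(r)} \alpha(T_a)$. Similarly, $T - N[r]$ is the disjoint union of the graphs $T_a - a$ over $a \in C(r)$, so $\alpha(T-N[r]) = \sum_{a \in C(r)} \alpha(T_a - a)$.

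Next I would use the elementary fact that for any graph $H$ and vertex $v \in V(H)$, $\alpha(H) - 1 \leq \alpha(H - v) \leq \alpha(H)$, so each summand difference $\alpha(T_a) - \alpha(T_a - a)$ is either $0$ or $1$. By Definition~\ref{pfm:def:alphacritical}, this difference equals $1$ precisely when $a$ is $\alpha$-critical in $T_a$, and $0$ otherwise. Thus the two sums above are equal if and only if every term vanishes, i.e.\ if and only if no child $a$ of $r$ is $\alpha$-critical in $T_a$. Combining this with the reformulation from the first step, $r$ is $\alpha$-critical in $T$ iff no child $a$ of $r$ is $\alpha$-critical in $T_a$, which is exactly the claim.

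The only subtle point to be careful about is the degenerate case where $r$ has no children (so $T$ consists of the single vertex $r$), in which case both sums are empty, the equality $\alpha(T-r) = \alpha(T - N[r]) = 0$ holds trivially, $r$ is indeed $\alpha$-critical in $T$, and the universal quantifier over $C(r)$ is vacuously satisfied; so the statement holds in this corner case as well. I do not foresee a substantial obstacle: the whole argument is just the observation that removing $r$ (or its closed neighborhood) cleanly splits the tree into the subtrees $T_a$, reducing the global statement to a termwise comparison whose entries are the $0/1$ indicators of $\alpha$-criticality of the children.
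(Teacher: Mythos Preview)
Your proof is correct and uses essentially the same decomposition as the paper: both arguments rewrite ``$r$ is $\alpha$-critical'' via Observation~\ref{pfm:obs:alphacritical} as $\alpha(T-r)=\alpha(T-N[r])$ and then split each side as $\sum_{a\in C(r)}\alpha(T_a)$ and $\sum_{a\in C(r)}\alpha(T_a-a)$. The only difference is presentational: the paper treats the two implications separately and, for the backward direction, explicitly constructs a MIS of $T$ containing $r$ by first choosing a MIS of each $T_a$ avoiding $a$; your termwise $0/1$ comparison handles both directions at once and is slightly cleaner.
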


\begin{proof}
    For the forward direction of the proof, assume $r$ is $\alpha$-critical in $T$. Since $T$ is a tree rooted at $r$, we have that $\alpha (T-r) = \sum_{a \in C(r)} \alpha (T_a)$ and $\alpha (T-N[r]) = \sum_{a \in C(r)} \alpha (T_a - a)$.
    
    Assume for the sake of contradiction that there is some $a' \in C(r)$ such that $a'$ is $\alpha$-critical in $T_{a'}$. Then we have $\alpha(T_{a'}) = 1 + \alpha(T_{a'} - a')$, which lead us to conclude that $\sum_{a \in C(r)} \alpha (T_a) \geq 1 + \sum_{a \in C(r)} \alpha (T_a - a)$. This contradicts that $r$ is $\alpha$-critical in $T$ by Observation~\ref{pfm:obs:alphacritical}.
    
    For the backward direction of the proof, assume that for all children $a$ of $r$, $a$ is not $\alpha$-critical for $T_a$. Seeing that $T$ is a tree, we recall that $\alpha (T-r) = \sum_{a \in N(r)} \alpha(T_a)$. Adding the single vertex $r$ back will increase the independence number by at most one. It remains to show that it also increase by at least one to conclude the proof. But this can be done by picking a maximum independent set in $T-r$ which avoids all of $C(r)$, and then include $r$. Note that such a set exists by the initial assumption that no child $a$ of $r$ is $\alpha$-critical for $T_a$.
\end{proof}

\begin{figure}[ht] \label{pfm:fig:acritalternate}
    \centering
    \includegraphics[scale=1.0]{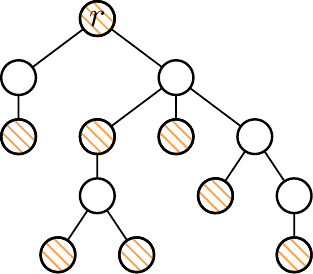}
    \caption{Lemma~\ref{pfm:lem:alphacriticalalternate}: Let $T$ be a tree rooted at $r$. A vertex $a \in V(T)$ is marked with orange stripes if $a$ it is $\alpha$-critical in the subtree $T_a$. Then $a$ is $\alpha$-critical if and only if, for all its children $b \in C(a)$, $b$ is not $\alpha$-critical in  $T_b$.}
\end{figure}

\begin{observation} \label{pfm:obs:redemption-cost}
    Let $G$ be a graph and let $a$ be an $\alpha$-critical vertex of $G$. If there is a set of vertices $Z$ such that $a$ is not $\alpha$-critical in $G-Z$, then $\alpha(G) > \alpha(G-Z)$.
\end{observation}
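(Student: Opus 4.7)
The plan is to unpack the two hypotheses directly via the definition of $\alpha$-critical and combine them with the trivial monotonicity $\alpha(H) \geq \alpha(H')$ whenever $H' \subseteq H$ is obtained by deleting vertices. The statement is assumed to apply in the case $a \notin Z$ (otherwise the phrase ``$a$ is not $\alpha$-critical in $G-Z$'' is vacuous, but the conclusion still follows immediately from $\alpha(G)=1+\alpha(G-a)\geq 1+\alpha(G-Z)> \alpha(G-Z)$, so I would dispose of this case in a single line at the start of the proof).

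Assuming $a \notin Z$, I would first translate the two hypotheses. Since $a$ is $\alpha$-critical in $G$, we have $\alpha(G) = 1 + \alpha(G-a)$. Since $a$ is a vertex of $G-Z$ and is \emph{not} $\alpha$-critical there, we have $\alpha(G-Z) = \alpha((G-Z)-a)$. Noting that $(G-Z)-a$ and $(G-a)-Z$ denote the same induced subgraph, I would then chain these through the obvious subgraph monotonicity of the independence number:
\begin{equation*}
\alpha(G) \;=\; 1 + \alpha(G-a) \;\geq\; 1 + \alpha((G-a)-Z) \;=\; 1 + \alpha((G-Z)-a) \;=\; 1 + \alpha(G-Z).
\end{equation*}
In particular $\alpha(G) > \alpha(G-Z)$, which is the desired conclusion.

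There is no real obstacle here; the only thing to watch is bookkeeping around whether $a \in Z$ and the fact that the definition of $\alpha$-critical presupposes the vertex is present in the graph under consideration. Once those conventions are settled, the proof is a two-line chain of (in)equalities.
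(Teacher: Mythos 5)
Your proof is correct: the paper states this as an Observation without giving an argument, and your two-line chain of (in)equalities, together with the quick disposal of the case $a \in Z$, is exactly the intended elementary justification via the definition of $\alpha$-critical and monotonicity of $\alpha$ under vertex deletion.
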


\begin{observation} \label{pfm:obs:alpha-of-trees-avoiding-Z}
    Let $T$ be a tree rooted at $r$, and let $Z$ be a set of vertices. Then the following holds:
    
    \begin{itemize}
        \item If $r$ is $\alpha$-critical in $T-Z$, then $\alpha(T-Z) = 1 + \sum_{a \in C(r)} \alpha(T_{a} - Z)$. In particular, if $r$ is $\alpha$-critical in $T$ then $\alpha(T) = 1 + \sum_{a \in C(r)} \alpha(T_{a})$.
        \item If $r$ is not $\alpha$-critical in $T-Z$, then $\alpha(T-Z) = \sum_{a \in C(r)} \alpha(T_{a} - Z)$. In particular, if $r$ is not $\alpha$-critical in $T$ then $\alpha(T) = \sum_{a \in C(r)} \alpha(T_{a})$.
    \end{itemize}
\end{observation}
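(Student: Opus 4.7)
The plan is to reduce the observation directly to the definition of $\alpha$-critical by means of one clean structural decomposition of the forest $T-Z$. I read both hypotheses as implicitly requiring $r \notin Z$, since the statement ``$r$ is $\alpha$-critical in $T-Z$'' only makes sense when $r$ is actually a vertex of $T-Z$. Given $r \notin Z$, the key structural observation is that
$$(T-Z) - r \;=\; \bigsqcup_{a \in C(r)} (T_a - Z),$$
where the union is disjoint as subgraphs. This holds because in the rooted tree $T$ there are no edges between $V(T_a)$ and $V(T_{a'})$ for distinct children $a, a' \in C(r)$, and the only edges incident with $r$ go to vertices of $C(r)$; all such edges vanish when $r$ is deleted. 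Since the independence number is additive over disjoint unions of graphs, this yields
$$\alpha((T-Z) - r) \;=\; \sum_{a \in C(r)} \alpha(T_a - Z).$$

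Both cases of the observation now fall out immediately from Definition~\ref{pfm:def:alphacritical}. If $r$ is $\alpha$-critical in $T-Z$, then by definition $\alpha(T-Z) = 1 + \alpha((T-Z) - r)$, and substituting the displayed identity gives $\alpha(T-Z) = 1 + \sum_{a \in C(r)} \alpha(T_a - Z)$. If $r$ is not $\alpha$-critical in $T-Z$, then some maximum independent set of $T-Z$ avoids $r$, which combined with the trivial bound $\alpha((T-Z)-r) \leq \alpha(T-Z)$ forces $\alpha(T-Z) = \alpha((T-Z) - r) = \sum_{a \in C(r)} \alpha(T_a - Z)$. The ``in particular'' specializations are just the case $Z = \emptyset$, where $T_a - Z = T_a$.

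No serious obstacle arises: the whole argument hinges on recognising that deleting the root of a rooted tree decouples the problem across the child subtrees, after which each identity is one line from the definition. The only mildly delicate point worth flagging is that $T_a - Z$ need not be connected when vertices strictly below $a$ belong to $Z$; this is harmless, however, because $\alpha$ is additive over connected components, so treating each $T_a - Z$ as a forest rather than a tree does not disturb the additive formula.
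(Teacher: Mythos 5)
Your proof is correct, and since the paper states this as an observation without an explicit proof, your argument -- deleting the root decouples $T-Z$ into the disjoint union of the $T_a - Z$, additivity of $\alpha$ over disjoint unions, and then the definition of $\alpha$-criticality in the two cases -- is exactly the intended justification. Your reading that $r \notin Z$ is consistent with every use of the observation in the paper (and in the remaining case $r \in Z$ the second bullet holds trivially by the same decomposition), so nothing is missing.
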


Armed with Proposition~\ref{pfm:prop:tree-small-Z} and the framework for reasoning about independent sets in trees presented above, we are now prepared to show the next lemma:

\begin{lemma} \label{pfm:lem:alphacrit-trees-inductive}
    Let $T$ be a tree rooted at $r$, and let $Z$ be a set of vertices. Then the following holds:
    \begin{enumerate}[label=(\alph*)]
        \item \label{pfm:cse:crit} If $\alpha (T) > \alpha (T-Z)$ and $r$ is $\alpha$-critical in $T$, then either
        \begin{enumerate}[label=(\roman*)]
            \item There exist two (possibly non-distinct) vertices $u,v \in Z \cap V(T)$ such that $\alpha(T) > \alpha(T-\{u,v\})$ and $r$ is still $\alpha$-critical in $T-\{u,v\}$.
            \item There exists a vertex $u \in Z \cap V(T)$ such that $\alpha(T) > \alpha(T-u)$ and $r$ is not $\alpha$-critical in $T - u$.
        \end{enumerate}
        \item \label{pfm:cse:noncrit} If $\alpha(T) = \alpha(T-Z)$ and $r$ is $\alpha$-critical in $T-Z$ but not in $T$, then there exists a vertex $u \in Z \cap V(T)$ such that $r$ is $\alpha$-critical in $T-u$.
    \end{enumerate}
\end{lemma}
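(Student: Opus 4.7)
I will prove parts~(a) and~(b) by induction on $|V(T)|$; each part will recurse only on itself, applied to a strictly smaller subtree. The main tools will be Observation~\ref{pfm:obs:alpha-of-trees-avoiding-Z} for decomposing $\alpha(T)$ and $\alpha(T-Z)$ across the children of the root, Lemma~\ref{pfm:lem:alphacriticalalternate} for translating root-criticality into statements about children in their subtrees, Observation~\ref{pfm:obs:redemption-cost} to convert a change of criticality into a strict $\alpha$-drop, and (for part~(a)) Proposition~\ref{pfm:prop:tree-small-Z} for the initial two-vertex witness $u,v$.

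\textbf{Part~(b).} I would combine Lemma~\ref{pfm:lem:alphacriticalalternate} with the root decomposition to derive $\sum_a [\alpha(T_a) - \alpha(T_a - (Z\cap T_a))] = 1$, and then use Observation~\ref{pfm:obs:redemption-cost} to conclude that exactly one child $a^*$ of $r$ is $\alpha$-critical in $T_{a^*}$ and accounts for the entire drop, while every other child $a$ satisfies $\alpha(T_a)=\alpha(T_a-Z)$ and is non-critical in $T_a$. Descending into $T_{a^*}$, Lemma~\ref{pfm:lem:alphacriticalalternate} then singles out a child $b$ of $a^*$ that switches from non-critical in $T_b$ to critical in $T_b-(Z\cap T_b)$; a second application of the decomposition, with non-negative summands, forces $\alpha(T_b)=\alpha(T_b-(Z\cap T_b))$, so the hypotheses of part~(b) are met on the strictly smaller $T_b$. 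The induction then yields a single $u \in Z\cap V(T_b)$ with $b$ $\alpha$-critical in $T_b-u$, from which Lemma~\ref{pfm:lem:alphacriticalalternate} propagates back up: $a^*$ becomes non-critical in $T_{a^*}-u$, and hence $r$ is $\alpha$-critical in $T-u$, as required.

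\textbf{Part~(a).} Apply Proposition~\ref{pfm:prop:tree-small-Z} to obtain $u,v\in Z$ with $\alpha(T)>\alpha(T-\{u,v\})$. If $r$ remains $\alpha$-critical in $T-\{u,v\}$ then option~(i) holds immediately; if $r$ becomes non-critical in $T-u$ (or $T-v$), Observation~\ref{pfm:obs:redemption-cost} delivers the $\alpha$-drop required for option~(ii). The hard case is when $r$ is $\alpha$-critical in $T$, $T-u$, and $T-v$, but not in $T-\{u,v\}$. I would first argue, via Lemma~\ref{pfm:lem:alphacriticalalternate}, that $u$ and $v$ must lie in a common subtree $T_a$ of some child $a$ of $r$ (otherwise whichever child of $r$ makes $r$ non-critical in $T-\{u,v\}$ would already have to be critical in $T-u$ or in $T-v$, contradicting the assumption). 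One level deeper: either $u,v$ lie in distinct subtrees $T_{b_1},T_{b_2}$ of children of $a$, in which case a short derivation shows $b_1$ is $\alpha$-critical in $T_{b_1}$ but non-critical in $T_{b_1}-u$, so Observation~\ref{pfm:obs:redemption-cost} followed by the decomposition gives $\alpha(T)>\alpha(T-u)$ and option~(i) holds with the repeated pair $(u,u)$; or $u,v$ lie in a common subtree $T_b$ of a child $b$ of $a$, and one verifies that $b$ is $\alpha$-critical in $T_b$, $T_b-u$, and $T_b-v$ but not in $T_b-\{u,v\}$. In this sub-subcase, part~(a) applied inductively to the strictly smaller $T_b$ with $Z':=\{u,v\}$ must return option~(i) (option~(ii) would require $b$ non-critical in $T_b-u$ or $T_b-v$); its witness pair cannot be $(u,v)$ (which fails criticality of $b$ in $T_b-\{u,v\}$), so it is $(u,u)$ or $(v,v)$, and that repeated single vertex propagates upward to satisfy option~(i) for $T$.

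\textbf{Main obstacle.} The delicate step is the hard case of part~(a): the Proposition's witness $(u,v)$ can destroy root-criticality while neither $u$ nor $v$ alone breaks it, so neither option of the lemma is directly available. The dichotomy between ``$u,v$ in the same child-subtree'' and ``$u,v$ in different child-subtrees'' at two consecutive levels, combined with the alternating criticality bookkeeping provided by Lemma~\ref{pfm:lem:alphacriticalalternate}, is what allows a repeated single vertex $(u,u)$ or $(v,v)$ to fulfil option~(i) when the distinct pair $(u,v)$ cannot; carrying this bookkeeping through the induction will be the technically sensitive part.
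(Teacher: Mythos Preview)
Your proposal is correct and takes a genuinely different route from the paper's proof, so a brief comparison is in order.

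\textbf{Structure of the induction.} The paper proves~(a) and~(b) by a \emph{mutual} induction: in the inductive step of~(a), Case~1 (no child subtree loses $\alpha$) invokes the inductive hypothesis of~(b) on a child, while in the inductive step of~(b) the paper invokes the inductive hypothesis of~(a) on the unique $\alpha$-critical child $a$ of $r$, ruling out option~(i) by the premise $\alpha(T)=\alpha(T-Z)$ and using option~(ii) directly. Your proposal deliberately decouples the two parts and has each recurse only on itself. This is sound: for~(b) your two-level descent reproduces on $T_b$ exactly the hypotheses of~(b), and for~(a) your recursion on $T_b$ with the \emph{reduced} set $Z'=\{u,v\}$ is forced into option~(i) with a repeated vertex, which then lifts.

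\textbf{Where the work goes in~(a).} The paper never fixes a witness pair up front; it does a structural case split on where the $\alpha$-drop lives (Case~1 vs.\ Case~2, then Subcases~2.1/2.2 at the grandchild level), invoking Proposition~\ref{pfm:prop:tree-small-Z} only inside Subcase~2.1. You instead start from the Proposition's pair $(u,v)$ and analyse how removing $u$, $v$, or both affects the criticality of $r$; the same ``two levels deep'' alternation bookkeeping via Lemma~\ref{pfm:lem:alphacriticalalternate} then localises $u,v$ first to a common $T_a$ and then to grandchild subtrees. Both arguments hinge on the same alternation lemma and decomposition, but yours is witness-driven while the paper's is structure-driven.

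\textbf{What each buys.} Your approach is conceptually clean (no mutual recursion, and the ``hard case'' isolates precisely the interesting phenomenon), and the trick of recursing with $Z'=\{u,v\}$ to force a repeated-vertex witness is elegant. The paper's approach, by keeping the full $Z$ throughout and never committing to a specific $(u,v)$ early, avoids the somewhat delicate argument that the inductive call cannot return option~(ii) or the pair $(u,v)$ itself.

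\textbf{One small omission.} In part~(b), when you descend into $T_{a^*}$ you tacitly assume $a^*\notin Z$ (otherwise $a^*$ has been deleted from $T_{a^*}-Z$ and there is no child $b$ to find via Lemma~\ref{pfm:lem:alphacriticalalternate}). This is harmless: if $a^*\in Z$, simply take $u:=a^*$; then in $T-u$ every remaining child of $r$ is non-critical in its subtree, so $r$ is $\alpha$-critical in $T-u$ as required. You should state this one-line case explicitly.
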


\begin{proof}
    The proof is done by strong induction on the height of the tree, where we will go two levels deep. For the base case we consider trees of height $0$ and $1$, \ie the single vertex and stars rooted at the center vertex. In both cases, it is easy to verify that the lemma holds. For the inductive step, we let $T$ be a tree rooted in $r$. By the induction hypothesis we assume the lemma holds for all strict subtrees of $T$, as these have all strictly smaller height than $T$.  We will now show each part of the lemma separately.
    
    \textbf{Part~\ref{pfm:cse:crit}.} We assume that $\alpha(T) > \alpha(T-Z)$, and that $r$ is $\alpha$-critical in $T$. First consider the case when $r \in Z$. Then just pick $u := r$ to meet the requirement for (ii). Thus, for the remainder of the case we can assume $r \notin Z$. By Lemma~\ref{pfm:lem:alphacriticalalternate}, we can also assume that there is no child $a$ of $r$ such that $a$ is $\alpha$-critical in $T_a$. We will now consider two cases.
    
    \textit{Case 1.} This case applies if there is no child $a$ of $r$ such that $\alpha(T_{a}) > \alpha(T_{a}-Z)$. Then there must exist at least one child $a'$ of $r$ such that $a'$ is $\alpha$-critical in $T_{a'} - Z$, or else there is a contradiction with the assumption that $\alpha(T) > \alpha(T-Z)$. Hence we can apply the induction hypothesis \ref{pfm:cse:noncrit} to $T_{a'}$ and find that there exists a vertex $u \in Z \cap V(T_{a'})$ such that $a'$ is $\alpha$-critical in $T_{a'} - u$. Observe that by Lemma~\ref{pfm:lem:alphacriticalalternate}, $r$ is not $\alpha$-critical in $T-u$. Further applying Observation~\ref{pfm:obs:alpha-of-trees-avoiding-Z} to find that $\alpha(T) > \alpha(T-u)$, we see that we have met the requirement of (ii).
    
    \textit{Case 2.} This case applies if there exists a child $a$ of $r$ such that $\alpha(T_a) > \alpha(T_a - Z)$. Before we proceed further, we want to establish that there must exist some child $b'$ of $a$ such that $\alpha(T_{b'}) > \alpha(T_{b'} - Z)$. For the sake of contradiction, assume there is not, \ie $\sum_{b \in C(a)} \alpha(T_{b}) = \sum_{b \in C(a)} \alpha(T_{b} - Z)$. Because $a$ is not $\alpha$-critical in $T_a$ by Lemma~\ref{pfm:lem:alphacriticalalternate}, we have that $\alpha(T_a) = \sum_{b \in C(a)} \alpha(T_{b})$ by Observation~\ref{pfm:obs:alpha-of-trees-avoiding-Z}. Now note that $\sum_{b \in C(a)} \alpha(T_{b} - Z) \leq \alpha(T_a-Z)$. Combining the above, we obtain $\alpha(T_a) \leq \alpha(T_a-Z)$, which contradicts the initial assumption of this case. Thus for the remainder of the case we can assume there is at least one child $b'$ of $a$ such that $\alpha(T_{b'}) > \alpha(T_{b'} - Z)$. By Lemma~\ref{pfm:lem:alphacriticalalternate}, we also have that there exists at least one child $b''$ of $a$ which is $\alpha$-critical in $T_{b''}$. Now we will again distinguish between two different cases:
    
    \textit{Subcase 2.1.} This subcase applies if there exist two distinct children $b$ and $b'$ of $a$ such that $\alpha(T_b) > \alpha(T_b-Z)$ and $b'$ is $\alpha$-critical in $T_{b'}$. %(see {Figure~\ref{fig:inductivepf}}).
	By Lemma~\ref{pfm:prop:tree-small-Z}, there exist $u,v \in Z \cap V(T_b)$ such that $\alpha(T_b) > \alpha(T_b-\{u,v\})$. Note that $a$ is not $\alpha$-critical in $T_a - \{u,v\}$ by Lemma~\ref{pfm:lem:alphacriticalalternate} because $b'$ is $\alpha$-critical in $T_{b'} - \{u,v\}$ ($T_{b'}$ is untouched by $u,v$). In order to meet the requirements of (i), we will proceed to show that $\alpha(T_a) > \alpha(T_a - \{u,v\})$.

\begin{figure}[b]
	\label{fig:inductivepf}
    \centering
    \begin{minipage}{.48\linewidth}
        \centering
        \includegraphics[scale=.9]{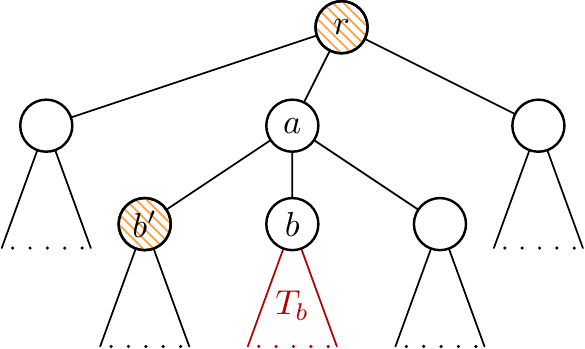}
        Subcase~2.1
    \end{minipage}%
    \hspace*{.04\linewidth}
    \begin{minipage}{.48\linewidth}
        \centering
        \includegraphics[scale=.9]{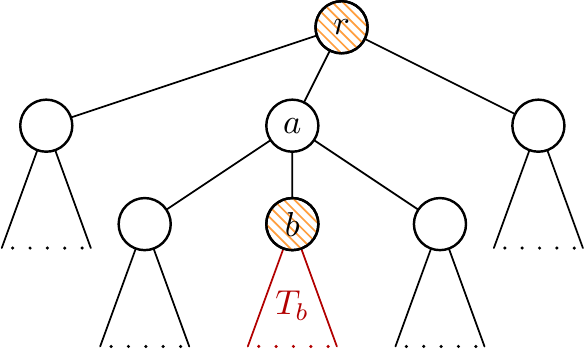}
        Subcase~2.2
    \end{minipage}

    \caption{Proof of Lemma~\ref{pfm:lem:alphacrit-trees-inductive} (a), Case 2: Orange stripes indicates that the vertex is $\alpha$-critical in the subtree rooted at that vertex. Subtrees where $\alpha(T_a) > \alpha(T_a-Z)$ are drawn with a red color.}
\end{figure}
	
    To show this, we use Observation~\ref{pfm:obs:alpha-of-trees-avoiding-Z} to find that $\alpha(T_a) = \sum_{b'' \in C(a)} \alpha(T_{b''})$ and $\alpha(T_a - \{u,v\}) = \sum_{b'' \in C(a)} \alpha(T_{b''}-\{u,v\})$. Because there exists a child $b$ of $a$ such that $\alpha(T_b) > \alpha(T_b-\{u,v\})$, it then follows that $\alpha(T_a) > \alpha(T_a - \{u,v\})$. Because $a$ is not $\alpha$-critical in $T_a - \{u,v\}$ and the siblings of $a$ are untouched by $u,v$, it follows that $r$ is $\alpha$-critical by Lemma~\ref{pfm:lem:alphacriticalalternate}. By symmetric reasoning as above but using $T$ and $r$ rather than $T_a$ and $a$, we conclude that also $\alpha(T) > \alpha(T-\{u,v\})$, thus obtaining the requirements for (i).
    
    \textit{Subcase 2.2.} This subcase applies if there is exactly one child $b$ of $a$ which is $\alpha$-critical in $T_b$, and this is the only child of $a$ for which $\alpha(T_b) > \alpha(T_b - Z)$. %(see Figure~\ref{fig:inductivepf}).
	The vertex $a$ may have other children besides $b$, but for all of these children $b' \in C(a) \setminus \{b\}$, we have that $b'$ is not $\alpha$-critical in $T_{b'}$ and $\alpha(T_{b'}) = \alpha(T_{b'} - Z)$. By the induction hypothesis \ref{pfm:cse:crit} there are two cases:

\begin{itemize}
    \item There exist two (possibly non-distinct) vertices $u,v \in Z \cap V(T_b)$ such that $\alpha(T_b) > \alpha(T_b-\{u,v\})$ and $b$ is $\alpha$-critical in $T_b - \{u,v\}$. This case leaves $a$ not $\alpha$-critical in $T_a - \{u,v\}$, and yields $\alpha(T_a) > \alpha(T_a-\{u,v\})$. By an analogous argument to that of the previous subcase, we obtain the requirements for (i).
    \item There exists $u \in Z \cap V(T_b)$ such that $\alpha(T_b) > \alpha(T_b-u)$, and such that $b$ is not $\alpha$-critical in $T_b - u$. In this case, $a$ is $\alpha$-critical in $T_a - u$. This, however, in turn yields $r$ not $\alpha$-critical in $T - u$, which must reduce the independence number of $T$ by Observation~\ref{pfm:obs:redemption-cost}. Thus, we have obtained the requirement for (ii).
\end{itemize}

    We have now exhausted all possibilities, and in each case obtained the requirements for either (i) or (ii). This concludes the proof for part~\ref{pfm:cse:crit}.
    
    \textbf{Part~\ref{pfm:cse:noncrit}.} We assume that $\alpha(T) = \alpha(T-Z)$ and that $r$ is $\alpha$-critical in $T-Z$, but not in $T$. By Lemma~\ref{pfm:lem:alphacriticalalternate}, we know that $r$ is $\alpha$-critical in $T-Z$ if and only if no child $a$ of $r$ is $\alpha$-critical in $T_a - Z$. If there are two or more children which are $\alpha$-critical in their respective subtrees before forbidding $Z$, we note that $\alpha(T) > \alpha(T-Z)$ by Observations \ref{pfm:obs:redemption-cost} and \ref{pfm:obs:alpha-of-trees-avoiding-Z}, contradicting the premise of part (b) in the lemma. Thus there is exactly one child $a$ of $r$ which is $\alpha$-critical in $T_a$, and for all other children $a'$ of $r$ it holds that $a'$ is not $\alpha$-critical in neither $T_{a'}$ nor $T_{a'} - Z$, as this would contradict either that $\alpha(T) = \alpha(T-Z)$ or that $r$ is $\alpha$-critical in $T-Z$.
    
    By Observation~\ref{pfm:obs:redemption-cost}, we know that $\alpha(T_a) > \alpha(T_a - Z)$. By the induction hypothesis (a), there are two possibilities; either (i) is true for $T_a$, and there exist vertices $u,v \in Z$ such that $\alpha(T_a) > \alpha(T_a - \{u,v\})$ and so that $a$ is $\alpha$-critical in $T_a - \{u,v\}$. That, however, would contradict the premise that $\alpha(T) = \alpha(T-Z)$. We may then assume that only (ii) is true, and that there exists a vertex $u \in Z \cap V(T_a)$ such that $\alpha(T_a) > \alpha(T_a - u)$ and which leave $a$ not $\alpha$-critical in $T_a - u$. See that this choice of $u$ will also leave $r$ $\alpha$-critical in $T-u$, concluding the proof for part (b).
\end{proof}

We will now move on to pseudotrees, \ie graphs which contains at most one cycle, and study how maximum independent sets behave in them when a set $Z$ of vertices forbidden to be included in any independent set is introduced. We can view a pseudotree $P$ as a cycle $C$ with rooted trees being attached to vertices of $C$ (if there is no cycle, $P$ is a tree and behaves accordingly). A vertex in $C$ may have zero, one or several trees being attached to it. Every non-cycle vertex with a neighbor in $C$ is the root $r$ of a tree $T$ attached to the cycle. Formally, we use this definition:

\begin{definition}[Attached tree]
    Let $P$ be a pseudotree which contains a cycle $C \subseteq P$. Each connected component $T$ of $P-C$ is then an \emph{attached tree} of $C$. Further, let $c \in V(C)$ and $r \in V(T)$ be the two vertices such that $cr \in E(P)$ (there is a unique pair with this property since $P$ is a pseudotree). Then $r$ is designated as the root of $T$, and $c$ is the \emph{attachment point} vertex for $T$. We say that $c$ has an attached tree $T$, and that $T$ is \emph{attached} to $C$ at $c$.
\end{definition}

\begin{figure}[ht] \label{pfm:fig:attached}
    \centering
    \includegraphics[scale=1.0]{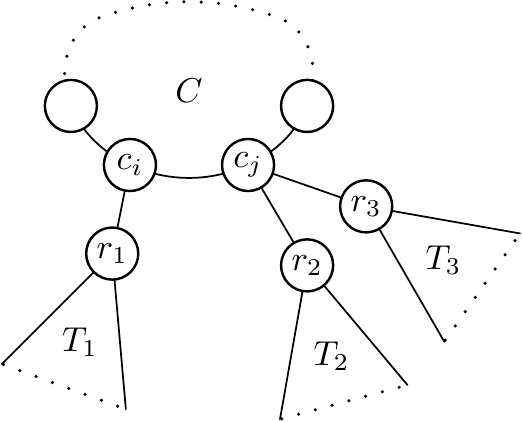}
    \caption{Attached tree: A pseudotree consists of a cycle $C$ and attached subtrees. We say that the attached tree $T_1$ is rooted at $r_1$, and is attached to $C$ at its attachment point vertex $c_i$. The cycle vertex $c_j$ has two attached trees, $T_2$ and $T_3$.}
\end{figure}

In building a maximum independent set $I$ for a pseudotree $P$, each attached tree is faced with two possibilities; either their attachment point vertex $c \in C$ is in $I$, or it is not. In some cases, $c$ being in $I$ may reduce the number of vertices available to pick in a tree $T$ attached to $c$, since the root $r$ of $T$ is forbidden in $I$ due to it being a neighbor of $c$. In cases like this, we observe that it is always at least as good to let $c$ be left outside $I$ and rather pick the larger independent set from $T$. This motivates distinguishing between trees where the root is $\alpha$-critical in the tree from those where it is not.

With this in mind, we observe that the following greedy strategy for finding a MIS in a pseudotree $P$ is correct: First, pick a MIS in every tree $T$ attached to $C$ which avoids the root $r$ of $T$ unless $r$ is $\alpha$-critical in $T$. Next, find a MIS for the cycle $C$ which avoids any $c \in C$ which has an attached tree $T$ where the root $r$ is $\alpha$-critical in $T$. The union of found independent sets will be a MIS for $P$. We can easily extend this algorithm to also avoid a set $Z$ of vertices:

\begin{algoss}[Greedy MIS for pseudotree avoiding Z] \label{pfm:alg:grdy-MIS-noZ-pt}
    Input: A pseudotree $P$ and a set of vertices $Z$.
    Output: A maximum independent set $I$ of $P-Z$.
    \begin{enumerate}
        \item If $P$ contain no cycle $C$, then return the MIS of $P$ avoiding $Z$.
        \item Let $Z' := \emptyset$. This set is for cycle vertices which will be marked as forbidden due to $\alpha$-criticalness of roots in attached subtrees.
        \item For every tree $T_0, T_1, \cdots, T_t$ attached to the cycle $C$, let $r_i$ be the root of $T_i$ and let $c_i \in C$ be the attachment point vertex for $T_i$. Then for $i \in \{1,2,\cdots,t\}$, do the following:
        \begin{enumerate}
            \item Let $I_{T_i}$ be a MIS for $T_i-Z$ which avoids $r_i$ unless $r_i$ is $\alpha$-critical in $T_i-Z$.
            \item If $r_i$ is $\alpha$-critical in $T_i-Z$, mark $c_i$ as forbidden: Let $Z' := Z' \cup \{c_i\}$.
        \end{enumerate}
        \item Let $I_{C}$ be a MIS for the cycle vertices $C - (Z\cup Z')$.
        \item Let the final solution be the union of the found sets: Return  $(\bigcup_{i=1}^t I_{T_i}) \cup I_{C}$.
    \end{enumerate}
\end{algoss}

When Algorithm~\ref{pfm:alg:grdy-MIS-noZ-pt} is called on input $(P,Z)$, we refer to sets of this solution with a superscript $^{(P,Z)}$, \eg the set $I_{T_i}^{(P,Z)}$ refers to the set $I_{T_i}$ when the algorithm is called on input $(P,Z)$. However, note that we will never actually call this algorithm during the kernelization of \ISPFshort{}, it is used only for analysis. Seeing that the above algorithm is correct, we are now ready to prove the final lemma of this section:

\begin{customlem}
    Let $P$ be a pseudotree and let $Z$ be a set of vertices such that $\alpha(P) > \alpha(P-Z)$. Then there exist three (possibly non-distinct) vertices $u,v,w \in Z \cap V(P)$ such that $\alpha(P) > \alpha(P-\{u,v,w\})$.
\end{customlem}

\begin{proof}
    Assume the condition of the lemma holds. If $P$ has no cycle, then by Lemma~\ref{pfm:prop:tree-small-Z} there exist $u,v \in Z \cap V(P)$ such that $\alpha(P) > \alpha(P-\{u,v\})$, and we have obtained the requirement of the lemma (we simply let one of the elements repeat, e.g. let $w:=v$). For the remainder of the proof, we may therefore assume that $P$ has a cycle $C$.
    
    If there is a vertex $w$ of the cycle $C$ in $P$ which is also in $Z$, then pick it as one of the three elements. Unless $\alpha(P) > \alpha(P-w)$, we have by Lemma~\ref{pfm:prop:tree-small-Z} that there are two vertices $u,v \in Z \cap V(P)$ such that $\alpha(P-w) > \alpha(P-\{u,v,w\})$. In either case we have obtained the requirement of the lemma. For the remainder of the proof we may therefore assume that there is no vertex of the cycle $C \subseteq P$ which is also in $Z$.
    
    Let us next consider the trees attached to $C$. If there exists a tree $T$ rooted at $r$ which is of one of the following types, then we can find $u,v,w \in Z$ which satisfy the requirement of the lemma:
    
    \begin{itemize}
        \item \emph{$r$ is not $\alpha$-critical in $T$ and $\alpha(T) > \alpha(T-Z)$.} Then by Lemma~\ref{pfm:prop:tree-small-Z}, there exist $u,v \in Z \cap V(T)$ such that $\alpha(T) > \alpha(T-\{u,v\})$. Observe that Algorithm~\ref{pfm:alg:grdy-MIS-noZ-pt} will produce sets $|I_{T}^{(P, \emptyset)}| > |I_{T}^{(P, \{u,v\})}|$ and $|I_{C}^{(P, \emptyset)}| \geq |I_{C}^{(P, \{u,v\})}|$, so by the correctness of the algorithm we conclude that $\alpha(P) > \alpha(P-\{u,v\})$. The requirement of the lemma is thus satisfied.
        
        \item \emph{$r$ is $\alpha$-critical in $T$, $\alpha(T) > \alpha(T-Z)$, and case (i) of Lemma~\ref{pfm:lem:alphacrit-trees-inductive}~\ref{pfm:cse:crit} holds for $T,r,Z$.} Then there exist $u,v \in Z\cap V(T)$ such that $\alpha(T) > \alpha(T-\{u,v\})$ and $r$ is $\alpha$-critical in $T-\{u,v\}$. In order to show that $\alpha(P) > \alpha(P-\{u,v\})$, consider what Algorithm~\ref{pfm:alg:grdy-MIS-noZ-pt} will do: Whether called on $(P, \emptyset)$ or $(P, \{u,v\})$, the attachment point vertex $c \in C$ for $T$ will still be marked as forbidden. Thus the only difference occurs in $I_{T}$, which is strictly smaller in the latter case. Hence, the requirement of the lemma is satisfied.
        
        \item \emph{$r$ is $\alpha$-critical in $T$, and $\alpha(T) = \alpha(T-Z)$.} Let $c$ be the attachment point vertex for $T$. Observe that $\alpha((P-T)-c) > \alpha((P-T)-(\{c\}\cup Z))$, or else there is a contradiction with the preconditions of the lemma. Since $(P-T)-c$ is a tree, we then have by Lemma~\ref{pfm:prop:tree-small-Z} that there exist $u,v \in Z \cap V((P-T)-c)$ such that $\alpha((P-T)-c) > \alpha((P - T) - \{c,u,v\})$. Observe that also $\alpha(P) > \alpha(P-\{u,v\})$, and the lemma is satisfied.
    \end{itemize}
    
    From now on, we can assume that there are no trees attached to $C$ of the above types. We observe that every tree $T$ which is attached to $C$ with root $r$ must then be one of these three types instead:
    \begin{enumerate}[label=(\alph*)]
        \item $r$ is $\alpha$-critical in $T$, and $\alpha(T) > \alpha(T-Z)$. Case (i) of Lemma~\ref{pfm:lem:alphacrit-trees-inductive}~\ref{pfm:cse:crit} does not apply, so by case (ii), there exists a singleton $u \in Z\cap V(T)$ such that $\alpha(T) > \alpha(T-u)$ and $r$ is not $\alpha$-critical in $T-u$.
        \item $r$ is $\alpha$-critical in $T-Z$, but not in $T$. We know that $\alpha(T) = \alpha(T-Z)$, and by Lemma~\ref{pfm:lem:alphacrit-trees-inductive}~\ref{pfm:cse:noncrit} that there exists a singleton $u \in Z \cap V(T)$ such that $r$ is $\alpha$-critical in $T-u$.
        \item $r$ is $\alpha$-critical in neither $T$ nor $T-Z$. We know that $\alpha(T) = \alpha(T-Z)$. 
    \end{enumerate}
    
    If there is a vertex of the cycle $c\in C$ such that two trees $T_a$ and $T_a'$ both of type (a) are attached to $c$ with respective roots $r$ and $r'$, then we have that there exists $u \in Z \cap V(T_a)$ such that $\alpha(T_a) > \alpha(T_a-u)$. Note that $u$ is sufficient to yield $\alpha(P) > \alpha(P-u)$. To see this, consider what happens in Algorithm~\ref{pfm:alg:grdy-MIS-noZ-pt}: Regardless of whether it was called with parameters $(P, \emptyset)$ or $(P, \{u\})$, $c$ will be marked as forbidden since $r'$ is still $\alpha$-critical in $T_a'-u$ ($u$ is not in $T_a'$). The only difference between the two runs of the algorithm is that $I_{T_a}$ will be strictly smaller in the latter case. Hence the requirement of the lemma holds, and going forward we can assume no such cycle vertex exists.
    
    If there is a cycle vertex $c$ with two attached trees $T_a$ and $T_b$ of types (a) and (b) respectively, then there are elements $u\in Z \cap V(T_a)$ and $v \in Z \cap V(T_b)$ such that $\alpha(T_a) > \alpha(T_a-u)$ and $r_{b}$ is $\alpha$-critical in $T_{b}-v$. Note that then $\alpha(P) > \alpha(P-\{u,v\})$. To see this, consider what happens in Algorithm~\ref{pfm:alg:grdy-MIS-noZ-pt}: Regardless of whether it was called with input $(P, \emptyset)$ or $(P, \{u,v\})$, $c$ will be marked as forbidden because $r_a$ is $\alpha$-critical in $T$, and $r_{b}$ is $\alpha$-critical in $T_{b} - v$. Since $I_{T_a}^{(P,\{u,v\})}$ is strictly smaller than $I_{T_a}^{(P,\emptyset)}$, we have reached the requirement of the lemma. From now on we assume there are no such cycle vertices.
    
    We can now partition all the cycle vertices $c \in C$ into the following three categories:
    \begin{description}
        \item[Redeemable] A cycle vertex $c\in C$ is \emph{redeemable} if it has exactly one attached tree $T$ of type (a), and all other attached trees are of type (c). We know there exists a singleton $u \in Z \cap V(T)$ such that $c$ is \emph{redeemed}, \ie such that $T$ will have its root $r$ not $\alpha$-critical in $T-u$. Informally we may note that the price payed in $T$ for redeeming $c$ is at least one by Observation~\ref{pfm:obs:redemption-cost}, \ie Algorithm~\ref{pfm:alg:grdy-MIS-noZ-pt} yields $|I_T^{(P,\emptyset)}| > I_T^{(P,\{u\})}$.
        \item[Blockable] A cycle vertex $c\in C$ is \emph{blockable} if it has at least one attached tree $T$ of type (b). It may also have any number of attached trees of type (c). We know there exists a singleton $u \in Z \cap V(T)$ such that $c$ is \emph{blocked}, \ie such that the root $r$ of $T$ will be $\alpha$-critical in $T-u$.
        \item[Free] A cycle vertex $c\in C$ is \emph{free} if all attached trees are of type (c).
    \end{description}
    
    If there are no redeemable cycle vertices in $C$, then consider some blockable cycle vertex $c_u$. It exists, or else we have that $\alpha(P) = \alpha(P-Z)$. Because there are no trees where $Z$ cause a drop in the independence number of the tree, the difference must occur in $|I_{C}|$ when using Algorithm~\ref{pfm:alg:grdy-MIS-noZ-pt} on the inputs $(P, \emptyset)$ and $(P, Z)$. By Lemma~\ref{pfm:prop:tree-small-Z}, we can extrapolate that there exist two (possibly non-distinct) blockable cycle vertices $c_v,c_w \in C-c_u$ such that $\alpha(C-c_u) > \alpha(C-\{c_u, c_v, c_w\})$. If we now let $u,v,w$ be elements that block their respective cycle vertex $c_u,c_v,c_w$, then we have that $\alpha(P) > \alpha(P-\{u,v,w\})$, and the lemma is satisfied. Going forth, we will assume there is at least one redeemable cycle vertex.
    
    If there is exactly one redeemable cycle vertex $c$, observe that Algorithm~\ref{pfm:alg:grdy-MIS-noZ-pt} will pick $\alpha(C)$ vertices of $C$ when finding the MIS for input $(P,\emptyset)$. Let $T$ be the tree of type (a) attached to $c$. See that there is no way to compensate the cost occurring in $I_T$ when $u$ redeems $c$. Thus, $\alpha(P) > \alpha(P-u)$, and the lemma holds. Assume then, there are at least two redeemable cycle vertices.
    
    If there are exactly two redeemable cycle vertices $c_1$ and $c_2$, then similarly to the previous case, Algorithm~\ref{pfm:alg:grdy-MIS-noZ-pt} will pick at least $\alpha(C)-1$ vertices of $C$ on input $(P,\emptyset)$. If we pick $u,v$ such that they redeem $c_1$ and $c_2$ respectively, then the price payed is at least two, out of which at most one can be compensated in $C$. Thus, $\alpha(P) > \alpha(P-\{u,v\})$, and the lemma holds. Assume then, there are at least three redeemable cycle vertices.
    
    Pick two redeemable vertices $c_i$ and $c_j$ and a maximal length path along the cycle $Q = \{c_1, c_2, \cdots, c_{i-1}, c_i, c_{i+1}, \cdots, c_{j-1}, c_j, c_{j+1}, \cdots, c_q\}$ such that $c_i$ and $c_j$ are the only redeemable vertices in $Q$. Since there are at least three redeemable vertices in $C$, such a path exists, and $c_1$ is not a neighbor of $c_q$. Let $\alpha(Q)$ be the maximum independent set of $Q$ after $c_i$ and $c_j$ have been both redeemed. It remains to observe that $\alpha(Q) \leq 1 + \alpha(Q-\{c_i, c_j\})$ to conclude that redeeming $c_i$ by $u$ and $c_j$ by $v$ will cause $\alpha(P) > \alpha(P-\{u,v\})$. This concludes the proof of the lemma.

\end{proof}

% !TEX root = pseudof.tex
\section{No polynomial kernel for \VCMFshort{}}\label{sec:lowb}

In this section we show that \VCMF{} admits no polynomial kernel unless \NPincoNPpoly{}. Our strategy is to make a reduction from \CNF{} parameterized by the number of variables to \ISMFshort{}. By the immidiate correspondance between \VCproblem{} and \ISproblem{}, the result for \VCMF{} will follow. We define the following problem.

\defparproblem{\ISMF{} (\ISMFshort)}{A simple undirected graph $G$, a mock forest modulator $X\subseteq V(G)$ such that no two cycles of $G-X$ share a vertex, and an integer $k$.}{Size of a mock forest modulator $|X|$.}{Does $G$ contain an independent set of size at least $k$?}

Our reduction also shows that there is no polynomial kernel for \VCproblem{} when parameterized by the size of modulators to cactus graphs and outerplanar graphs as well, under the same condition. This strategy is an adaption of the strategy used by Jansen, Raman and Vatshelle~\cite{JansenRV14} to show that \FVSproblem{} does not admit a polynomial kernel parameterized by a modulator to mock forests unless \NPincoNPpoly{}.
%
% \begin{definition}[Mock forest]
%     A graph $G$ is a \emph{mock forest} if no two cycles of $G$ share a vertex.
% \end{definition}

\begin{definition}[Polynomial-parameter transformation {\cite{BodlaenderTY11}}]
    Let $\Q, \Q' \subseteq \Sigma ^* \times \N$ be parameterized problems. A polynomial-parameter transformation from $Q$ to $Q'$ is an algorithm that, on input $(x,k) \in \Sigma ^* \times \N$, takes time polynomial in $|x| + k$, and outputs an instance $(x,k) \in \Sigma ^* \times \N$ such that 
  %  \begin{itemize}
        %\item 
        $k'$ is polynomially bounded in $k$, and
        %\item 
        $(x,k) \in \Q$ if and only if $(x',k') \in \Q'$.
    %\end{itemize}
    %
    For a parameterized problem $\Q \subseteq \Sigma^* \times \N$, the unparameterized version of $\Q$ is the set $\hat{\Q} = \{x1^k \mid (x,k) \in \Q\}$ where 1 is a new symbol that is added to the alphabet.
\end{definition}

\begin{proposition}[\cite{BodlaenderTY11}] \label{ismf:prop:poly-par-trans}
    Let $\Q$ and $\Q'$ be parameterized problems and let $\hat{\Q}$ and $\hat{\Q}'$ be the unparameterized versions of $\Q$ and $\Q'$ respectively. Suppose $\hat{\Q}$ is NP-hard and $\hat{\Q}'$ is in NP. If there is a polynomial-parameter transformation form $\Q$ to $\Q'$, and $\Q'$ has a polynomial kernel, then $\Q$ also has a polynomial kernel.
\end{proposition}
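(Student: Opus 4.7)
The strategy is to compose three polynomial-time transformations: the given polynomial-parameter transformation, the assumed polynomial kernel of $\Q'$, and a Karp reduction from $\hat{\Q}'$ back to $\hat{\Q}$ that exists by the complexity assumptions. Given an instance $(x,k)$ of $\Q$, I would first apply the polynomial-parameter transformation to obtain an equivalent instance $(x',k')$ of $\Q'$ with $k' \leq p(k)$ for some polynomial $p$. Note that $|x'|$ is only bounded as polynomial in $|x|+k$, not in $k$ alone, but this is acceptable at this stage. Next, feeding $(x',k')$ to the polynomial kernel of $\Q'$ yields an equivalent instance $(x'',k'')$ with $|x''|+k'' \leq q(k') \leq q(p(k))$, which is polynomial in $k$.

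At this point we have a $\Q'$-instance whose total size is polynomial in $k$, but the theorem demands a small instance of $\Q$. To cross back I would pass through the unparameterized versions: encode $(x'',k'')$ as the word $x''1^{k''} \in \hat{\Q}'$, which still has length polynomial in $k$. Since $\hat{\Q}'$ lies in NP and $\hat{\Q}$ is NP-hard, there is a polynomial-time many-one reduction $f$ from $\hat{\Q}'$ to $\hat{\Q}$. Applying $f$ produces a word $w$ of length polynomial in $|x''|+k''$, and parsing $w$ as $y1^j$ yields an instance $(y,j)$ of $\Q$ whose total size and parameter are both polynomial in $k$.

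Correctness follows link by link: the polynomial-parameter transformation preserves the yes/no answer by definition, the kernel of $\Q'$ does so by its correctness guarantee, and the Karp reduction together with the $x \mapsto x1^k$ encoding is designed to preserve membership between $\hat{\Q}'$ and $\hat{\Q}$. The only bookkeeping item to verify is that the final parameter $j$ is genuinely polynomial in $k$, which is automatic since $j$ cannot exceed the length of $w$. I do not anticipate a substantial obstacle; this proposition is essentially a composition lemma, and all the difficulty in applying it to our setting is pushed into the construction of a polynomial-parameter transformation from \CNF{} to \ISMFshort{}, not into invoking the proposition itself.
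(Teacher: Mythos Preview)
The paper does not prove this proposition at all; it is quoted from \cite{BodlaenderTY11} and used as a black box. Your proposal reproduces the standard argument from that reference (compose the polynomial-parameter transformation with the kernel of $\Q'$, then cross back via a Karp reduction $\hat{\Q}'\to\hat{\Q}$ guaranteed by NP-hardness of $\hat{\Q}$ and membership of $\hat{\Q}'$ in NP), and it is correct. The only bookkeeping point you gloss over is that the string $w$ returned by the Karp reduction need not be of the form $y1^j$ with $y\in\Sigma^*$; since $1$ is a fresh symbol this is detectable, and in that case one outputs a fixed no-instance of $\Q$. With that caveat your argument is complete and matches the original.
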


\begin{proposition}[\cite{DellM10,FortnowS11}] \label{ismf:prop:cnf-sat-hard}
    \CNF{} parameterized by the number of variables does not admit a polynomial kernel unless \NPincoNPpoly{}.
\end{proposition}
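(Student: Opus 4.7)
The plan is to reduce the statement to the Fortnow--Santhanam framework via a standard OR-composition. Recall that a parameterized problem $\Q$ has an \emph{OR-composition} if, given $t$ instances $(x_1,k),\dots,(x_t,k)$ sharing the same parameter, one can in polynomial time construct a single instance $(x',k')$ with $k' \leq \operatorname{poly}(k + \log t)$ such that $x'$ is a yes-instance iff some $x_i$ is. Fortnow and Santhanam show that an NP-hard parameterized problem admitting such a composition cannot have a polynomial kernel unless \NPincoNPpoly{}; Dell and Melkebeek later sharpen and refine the underlying machinery to also give tight polynomial-exponent bounds.

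First, I would exhibit an OR-composition for \CNF{} parameterized by the number of variables $n$. Given instances $\phi_1,\dots,\phi_t$ on a common variable set $\{x_1,\dots,x_n\}$ (padding with dummy variables if necessary, at polynomial cost), introduce $\lceil\log t\rceil$ fresh selector variables $s_1,\dots,s_{\lceil\log t\rceil}$ whose truth assignment encodes an index $i \in \{1,\dots,t\}$ in binary. For each clause $C$ appearing in $\phi_i$, add to the composed formula $\Psi$ the clause $C \vee L_i$, where $L_i$ is a disjunction of $\lceil\log t\rceil$ literals over the selectors that evaluates to false exactly when the selectors encode $i$. Any satisfying assignment of $\Psi$ then picks out an index $i^*$ and a truth assignment on the original variables satisfying $\phi_{i^*}$, and conversely. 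The composed instance has only $n + \lceil\log t\rceil$ variables, matching the parameter bound required of an OR-composition.

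Second, I would invoke the Fortnow--Santhanam theorem (or, for a sharper dependence, the Dell--Melkebeek cross-composition/weak-composition machinery): since \CNF{} is NP-hard, the existence of an OR-composition into itself rules out a polynomial kernel under \NPincoNPpoly{}. The core of that argument converts a hypothetical polynomial kernel into a distillation procedure that compresses an OR of $t$ SAT instances into a single instance whose size is polynomial in that of one input, and then uses this compression as short nondeterministic advice placing \classCoNP{} inside \classNP$/\text{poly}$.

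The main obstacle is not the construction of the composition---once the selector encoding is chosen, it is essentially routine bookkeeping to verify that the OR-semantics and the variable count are correct---but the complexity-theoretic step at the end. The distillation-to-collapse argument is delicate, relying on a careful probabilistic analysis of how a short advice string can simulate an oracle for a large disjunction of instances, and on the machinery of AND/OR-distillations developed across \cite{FortnowS11,DellM10}. I would cite those works for that step rather than reproduce it, since it is used as a black box in essentially all polynomial-kernel lower bounds, including the reduction to \ISMFshort{} that motivates invoking this proposition in the paper.
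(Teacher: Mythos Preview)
The paper does not prove this proposition at all: it is stated with citations to \cite{DellM10,FortnowS11} and used as a black box, so there is no ``paper's own proof'' to compare against. Your sketch is the standard argument from those references---the selector-variable OR-composition for \CNF{} together with the Fortnow--Santhanam distillation-to-collapse machinery---and it is correct as written; in the context of this paper, however, a one-line citation is all that is expected.
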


\begin{definition}[Clause gadget]
    Let $k \geq 1$ be an integer. The \emph{clause gadget of size k} is the graph $\G_k$ consisting of $k$ triangles $T_1, T_2, \ldots , T_k$ and two extra vertices $r_0$ and $l_{k+1}$ connected as follows: For each triangle $T_i$, label the three vertices $l_i$,$r_i$, and $s_i$ (\emph{left vertex}, \emph{right vertex} and \emph{spike vertex}, respectively). Then for each $i \in \{0\} \cup [k]$, let there be an edge $r_il_{i+1}$ connecting the right vertex of $T_i$ to the left vertex of $T_{i+1}$. In this way, $\G_k$ is a ``path'' of $k$ connected triangles, with two extra degree-1 vertices attached at the ends.
\end{definition}
\vspace{-.5cm}
\begin{figure}[h]
    \centering
    \includegraphics[scale=0.9]{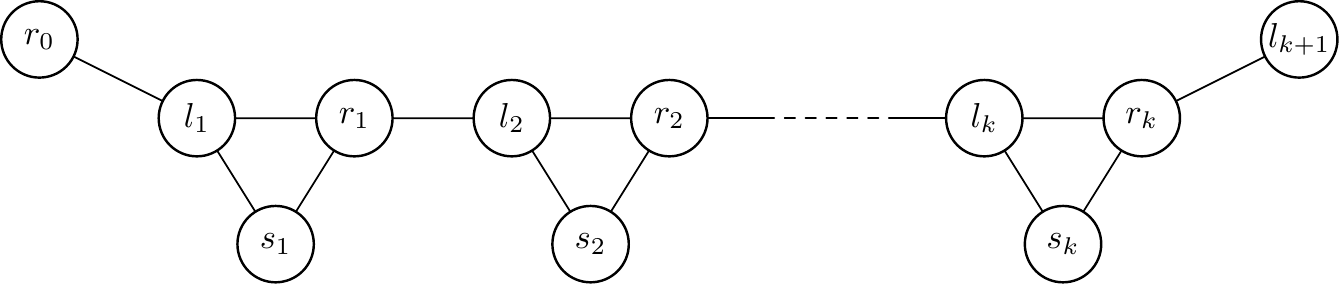}
    \caption{A clause gadget $\G_k$}
\end{figure}
\vspace{-1cm}
\begin{observation} \label{ismf:obs:gadget-mis-size}
    For a clause gadget $\G_k$, the independence number $\alpha(\G_k)$ is exactly $k+2$. This can be obtained by the independent set containing all the spike vertices as well as $r_0$ and $l_{k+1}$. We verify that this is also optimal since at most one vertex can be chosen from each triangle $T_i$, and there are only two non-triangle vertices.
\end{observation}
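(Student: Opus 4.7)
\textbf{Proof plan for Observation~\ref{ismf:obs:gadget-mis-size}.} The claim has two directions: a lower bound $\alpha(\G_k)\ge k+2$ exhibited by an explicit independent set, and a matching upper bound $\alpha(\G_k)\le k+2$.

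For the lower bound, I would exhibit the set $I=\{s_1,s_2,\ldots,s_k,r_0,l_{k+1}\}$ of size $k+2$ and verify that it is independent. Each spike vertex $s_i$ has neighbors only inside its own triangle $T_i$ (namely $l_i$ and $r_i$), so no two spikes are adjacent and no spike is adjacent to $r_0$ or $l_{k+1}$. The vertex $r_0$ has only $l_1$ as a neighbor and $l_{k+1}$ has only $r_k$ as a neighbor, neither of which appears in $I$. Finally $r_0$ and $l_{k+1}$ are non-adjacent because they sit at opposite ends of the gadget. This shows $I$ is an independent set, hence $\alpha(\G_k)\ge k+2$.

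For the upper bound, observe that the vertex set of $\G_k$ partitions into the $k$ triangles $T_1,\ldots,T_k$ together with the two extra vertices $r_0$ and $l_{k+1}$. Any independent set can contain at most one vertex from each triangle (since a triangle is a clique of size $3$), contributing at most $k$ vertices in total, plus at most the two remaining vertices $r_0$ and $l_{k+1}$. This gives $\alpha(\G_k)\le k+2$, matching the lower bound.

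The argument is essentially a one-line counting observation, so there is no real obstacle; the only thing to be careful about is verifying that the exhibited set is indeed independent by running through each type of edge in the gadget definition (the triangle edges $l_ir_i,l_is_i,r_is_i$ and the connecting edges $r_il_{i+1}$ for $i\in\{0\}\cup[k]$).
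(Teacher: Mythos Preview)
Your proof is correct and follows essentially the same approach as the paper: exhibit the set $\{s_1,\ldots,s_k,r_0,l_{k+1}\}$ for the lower bound, and use the partition into $k$ triangles plus two extra vertices for the upper bound. You simply spell out the independence verification in more detail than the paper does, but the argument is identical.
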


\begin{observation} \label{ismf:obs:at-least-one-spike-mis}
    For a clause gadget $\G_k$, every maximum independent set $I$ must contain at least one spike vertex. Removing the spike vertices, what remains of $\G_k$ is an even path with $2k + 2$ vertices, yielding a maximum independent set of size $k+1$, which is strictly smaller than $\alpha(\G_k)$.
\end{observation}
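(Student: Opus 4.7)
The plan is to proceed by contradiction. I will suppose that some maximum independent set $I$ of $\G_k$ contains no spike vertex at all, and then derive a contradiction with the equality $\alpha(\G_k)=k+2$ established in Observation~\ref{ismf:obs:gadget-mis-size}.

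First I would analyze the graph $H := \G_k - \{s_1,\ldots,s_k\}$ obtained by deleting every spike vertex. From the definition of the clause gadget, the vertex set of $H$ is $\{r_0, l_1, r_1, l_2, r_2, \ldots, l_k, r_k, l_{k+1}\}$, and the surviving edges are of two types: for each triangle $T_i$ only the edge $l_i r_i$ remains (the two spike-incident edges $l_i s_i$ and $r_i s_i$ are gone with $s_i$), and the bridging edges $r_i l_{i+1}$ for $i \in \{0,1,\ldots,k\}$ are all intact. Together these edges form the single path $r_0, l_1, r_1, l_2, r_2, \ldots, l_k, r_k, l_{k+1}$ on $2k+2$ vertices, so $\alpha(H) = (2k+2)/2 = k+1$ by the standard formula for the independence number of a path.

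The contradiction is then immediate: by the assumption that $I$ avoids every spike vertex, $I$ is an independent set of $H$, and hence $|I| \le \alpha(H) = k+1$; on the other hand, $I$ is maximum in $\G_k$, so $|I| = k+2$ by Observation~\ref{ismf:obs:gadget-mis-size}. Therefore the assumption must fail, which proves that every maximum independent set of $\G_k$ contains at least one spike vertex. I do not expect a genuine obstacle here, since the argument is essentially a rewording of the hint embedded in the observation itself; the only care needed is to verify that deleting the spikes leaves exactly the claimed path (no extra edges, no missing ones) before invoking the independence-number formula.
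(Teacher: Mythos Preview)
Your argument is correct and is exactly the approach the paper takes: the observation's own justification is precisely that deleting the spikes leaves the even path $r_0, l_1, r_1, \ldots, r_k, l_{k+1}$ on $2k+2$ vertices with independence number $k+1 < k+2 = \alpha(\G_k)$. You have simply spelled out the details more carefully than the paper does.
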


\begin{observation} \label{ismf:obs:one-spike-mis}
    For a clause gadget $\G_k$, let $S$ denote the set of spike vertices. Observe that for each spike vertex $s_i \in S$, there exists a maximum independent set $I_i$ such that $s_i$ is the only spike in $I_i$, \ie{} $I_i \cap S = \{s_i\}$.
\end{observation}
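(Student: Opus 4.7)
The plan is to prove the observation by explicitly constructing, for each spike $s_i$, a maximum independent set $I_i$ of $\G_k$ whose intersection with the spike set $S$ is exactly $\{s_i\}$. The approach is purely structural: force $s_i$ in, exclude all other spikes by fiat, and then use a maximum independent set on the ``backbone'' of the gadget that remains.

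First I would put $s_i$ into a candidate set and delete from $\G_k$ the remaining spikes $s_j$ (for $j\ne i$) together with $s_i$'s two triangle-mates $l_i$ and $r_i$. The graph that remains splits into the isolated vertex $s_i$ together with two disjoint backbone paths: a left path $P_L$ on the vertices $r_0,l_1,r_1,\ldots,l_{i-1},r_{i-1}$ having $2i-1$ vertices, and a right path $P_R$ on $l_{i+1},r_{i+1},\ldots,r_k,l_{k+1}$ having $2(k-i)+1$ vertices. (When $i=1$ or $i=k$ one of these paths degenerates to a single endpoint, $r_0$ or $l_{k+1}$ respectively, and the argument still goes through.)

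Next I would take a maximum independent set in each of $P_L$ and $P_R$. Since a path on $n$ vertices has independence number $\lceil n/2\rceil$, these contribute $i$ and $k-i+1$ vertices respectively. Let $I_i$ be the union of these two sets together with $\{s_i\}$; then $|I_i| = 1 + i + (k-i+1) = k+2$. Independence in $\G_k$ is immediate: the only neighbors of $s_i$ in $\G_k$ are $l_i,r_i$, which were removed, and $P_L$ and $P_R$ lie in different components of the backbone after deleting $\{l_i,r_i\}$, so no backbone edge is violated. Finally, by Observation~\ref{ismf:obs:gadget-mis-size}, $\alpha(\G_k)=k+2$, so $I_i$ is indeed a MIS, and it contains $s_i$ as its only spike by construction.

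There is no real obstacle here; the only care needed is bookkeeping in the boundary cases $i\in\{1,k\}$ so that the ``degenerate'' end path is counted correctly, which is routine.
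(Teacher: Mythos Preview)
Your construction is correct. The paper states this as an Observation without proof, so there is no argument to compare against; your explicit construction of $I_i$ is exactly the kind of verification the reader is implicitly asked to supply. The vertex counts on the two backbone paths are right, the independence check is complete (the only neighbors of $s_i$ are $l_i,r_i$, and removing them disconnects the backbone), and the boundary cases $i\in\{1,k\}$ behave as you say.
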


\begin{theorem}\label{thm:mockforest}
    \ISMF{} does not admit a polynomial kernel unless \NPincoNPpoly{}.
\end{theorem}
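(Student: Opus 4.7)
The plan is to apply Proposition~\ref{ismf:prop:poly-par-trans} by giving a polynomial-parameter transformation from \CNF{} parameterized by the number of variables $n$ (which is NP-hard and admits no polynomial kernel by Proposition~\ref{ismf:prop:cnf-sat-hard}) to \ISMFshort. Since the unparameterized version of \ISMFshort{} is clearly in NP, this yields the theorem.

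Given a CNF formula $\phi$ on variables $x_1,\ldots,x_n$ with clauses $C_1,\ldots,C_m$, where $C_j$ has literals $\ell_{j,1},\ldots,\ell_{j,k_j}$, I would construct $(G,X,k)$ as follows. Put two \emph{literal vertices} $v_i^T,v_i^F$ into $X$ for each variable and join them by an edge, so $X$ induces a perfect matching on $2n$ vertices. For each clause $C_j$ attach a fresh copy of the clause gadget $\G_{k_j}$ with spikes $s_{j,1},\ldots,s_{j,k_j}$, and join each spike to the modulator vertex of \emph{opposite polarity} to its literal: $s_{j,i}$ to $v_p^F$ if $\ell_{j,i}=x_p$, and to $v_p^T$ if $\ell_{j,i}=\neg x_p$. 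Finally set $k := n+\sum_{j=1}^{m}(k_j+2)$. The construction is polynomial-time, $|X|=2n$ is polynomial in the parameter, and $G-X$ is a disjoint union of clause gadgets; since within a single clause gadget the triangles pairwise share no vertex, $G-X$ is a mock forest.

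For correctness I would argue both directions. In the \emph{forward} direction, given a satisfying assignment I place $v_i^T$ in $I$ when $x_i$ is true and $v_i^F$ otherwise, and for each $C_j$ I select a true literal $\ell_{j,i^\star}$ whose spike's $X$-neighbor is then absent from $I$; Observation~\ref{ismf:obs:one-spike-mis} supplies a MIS of $\G_{k_j}$ of size $k_j+2$ using only that spike, and since only spikes have neighbours in $X$ the union remains independent, of total size $k$. In the \emph{backward} direction, the bounds $|I\cap X|\leq n$ (the matching on $X$) and $|I\cap V(\G_{k_j})|\leq k_j+2$ (Observation~\ref{ismf:obs:gadget-mis-size}) must all hold with equality, which forces $I\cap X$ to pick exactly one vertex per pair and thereby define an assignment $\phi$, while every $I\cap V(\G_{k_j})$ is a MIS of $\G_{k_j}$ and therefore contains some spike $s_{j,i^\star}$ by Observation~\ref{ismf:obs:at-least-one-spike-mis}; the missing $X$-neighbor of $s_{j,i^\star}$ certifies, via the polarity convention, that $\ell_{j,i^\star}$ is true under $\phi$.

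I expect the main obstacle to be simply pinning down the polarity convention so the two directions line up; once that is right, the proof reduces to an essentially mechanical application of the three clause-gadget observations. As a byproduct I would note that each clause gadget is both outerplanar and cactus, so the same reduction also rules out polynomial kernels for \VCproblem{} parameterized by a modulator to outerplanar or to cactus graphs, unless \NPincoNPpoly{}.
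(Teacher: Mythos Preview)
Your proposal is correct and follows essentially the same route as the paper: the same variable-edge gadgets in $X$, the same clause gadgets $\G_{k_j}$ with spikes wired to the opposite-polarity literal vertex, the same target $k=n+\sum_j(k_j+2)$, and the same tight counting in both directions using Observations~\ref{ismf:obs:gadget-mis-size}, \ref{ismf:obs:at-least-one-spike-mis}, and \ref{ismf:obs:one-spike-mis}. Your closing remark on outerplanar and cactus modulators also matches the paper's corollary.
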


\begin{proof}
    Since \VCproblem{} is in NP and \CNF{} is NP-hard, we have by Propositions~\ref{ismf:prop:poly-par-trans} and \ref{ismf:prop:cnf-sat-hard},  that it is sufficient to show a polynomial-parameter transformation from \CNF{} parameterized by the number of variables to \ISMFshort{}.
    
    Consider an instance $F$ of \CNF{} consisting of clauses $C_1, C_2, \ldots , C_m$ over the variables $x_1, x_2, \ldots , x_n$. For a clause $C_j$, let $h(j)$ denote the number of literals in $C_j$. We will in polynomial time construct an instance $(G, X, k)$ of \ISMFshort{} such that $F$ is satisfied if and only if $(G, X, k)$ is a \yes-instance. We construct  graph $G$ as follows.
    
    For each variable $x_i$, we let there be two vertices $t_i$ and $f_i$ in $V(G)$. Let them be connected by an edge $t_if_i \in E(G)$. Which of $t_i$ and $f_i$ is included in a maximum independent set for $G$ will represent whether the variable $x_i$ is set to true or false. 
    
    For each clause $C_j$, let $\ell_1, \ell_2, \ldots , \ell_{h(j)}$ denote the literals of $C_j$. Let $\C_j$ be a copy of the clause gadget $\G_{h(j)}$, and add it to the graph $G$. Let the spikes of $\C_j$ be denoted $s_{1}, s_{2}, \ldots , s_{h(j)}$. We will connect $\C_j$ to the rest of $G$ as follows: For each literal $\ell_r \in C_j$, let there be an edge from $s_{r}$ to $f_i$ if and only if $\ell_r = x_i$. Similarly, let there be an edge from $s_{r}$ to $t_i$ if and only if $\ell_r = \neg x_i$. By this process, every spike of $\C_j$ is connected to exactly one vertex outside of $\C_j$, which is either $t_i$ or $f_i$ for some $i \in [n]$.
    This concludes the construction of  $G$. 

Let the set $X$ consist of the variable gadget vertices, \ie{}, let $X = \{t_i \mid i\in [n]\} \cup \{f_i \mid i\in [n]\}$. Observe that $X$ is indeed a mock forest modulator for $G$, since every connected component of $G - X$ is exactly a clause gadget, and thus also a mock forest. Also note that $|X| = 2n$, which is polynomial in the input parameter. Finally, we let $k = n + \sum_{j = 1}^{m}(h(j) + 2)$. It remains to show that $F$  is satisfiable if and only if $(G, X, k)$ is a \yes-instance.
    
    ($\Rightarrow$) Assume the formula is satisfiable by the assignment $\phi$. We will now build an independent set $I$ in $G$ which has size at least $k$. Initially, let $I_X = \emptyset$. For each variable $x_i$, let $t_i$ be in $I_X$ if $\phi(x_i)$ is \true, and let $f_i$ be in $I_X$ otherwise. In this way, $n$ vertices are added to $I_X$. Observe that this process preserves independence of $I_X$.
    
    For each clause $C_j$, we know that there exists some satisfied literal $\ell_r$. In the corresponding clause gadget $\C_j$, observe that $s_r \notin N_G[I_X]$ by the construction of the graph and the choice of $I_X$. Then by Observation~\ref{ismf:obs:one-spike-mis}, we can choose an independent set $I_j$ for $\C_j$ which is disjoint from $N_G[I_X]$.
    
    Finally, let $I$ be the union of $I_X$ and $\bigcup_{j=1}^{m} I_j$. Observe that independence is maintained, since there are no edges between $I_X$ and $I_j$ for all $j \in [m]$, and there are no edges between $I_j$ and $I_{j'}$ for all choices of $j,j'\in [m], j \neq j'$, since there were no edges between $\C_j$ and $\C_{j'}$. Further, we note that $|I_X| = n$, and $|I_j| = h(j) + 2$ for every $j \in [m]$, and that all the sets are vertex disjoint. Thus we obtain that $|I| = n + \sum_{j=1}^{m} (h(j) + 2)$.
    
    ($\Leftarrow$) Assume that there exists an independent set $I$ for $G$ with size $|I| \geq n + \sum_{j=1}^{m} (h(j) + 2)$. We   construct an assignment $\phi$ which satisfies the SAT formula $F$. By Observation~\ref{ismf:obs:gadget-mis-size}, we know that $|I\setminus X| \leq \sum_{j=1}^{m} (h(j) + 2)$. Since $G[X]$ consists exactly of $n$ pairwise joined vertices, we also know that $|I\cap X| \leq n$. Thus, $|I| \leq n + \sum_{j=1}^{m} (h(j) + 2)$, and equality holds for all the relations. For each variable $x_i$ it must thus be the case that either $t_i \in I$ and $f_i \notin I$, or vice versa. We let $\phi(x_i)$ evaluate to \true{} if $t_i \in I$, and to \false{} otherwise.
    
    It remains to show that $\phi$ is in fact a satisfying assignment. Consider some clause $C_j$ and its corresponding gadget $\C_j$. Because $|I \cap \C_j| = h(j) + 2$, we have by Observation~\ref{ismf:obs:at-least-one-spike-mis} that there exists a spike vertex $s_r \in \C_j \cap I$. Assume for the sake of contradiction that $\ell_r \in C_j$ is not satisfied by $\phi$. This implies that $x_i$ was assigned a value that would not satisfy $\ell_r$. Without loss of generality, (by symmetry) assume $\ell_r = x_i$ and $\phi(x_i) = \false$. Then $f_i \in I$; however, by the construction of the graph, there is an edge between $s_r$ and $f_i$. This contradicts that $I$ is independent. Thus $\ell_r \in C_j$ is satisfied by $\phi$ and we have concluded the proof.
\qed\end{proof}

\begin{corollary}
    \VCMF{} does not admit a polynomial kernel unless \NPincoNPpoly{}.
\end{corollary}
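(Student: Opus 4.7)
The plan is to deduce the corollary from Theorem~\ref{thm:mockforest} via a trivial polynomial-parameter transformation, exploiting the standard complementary duality between \VCproblem{} and \ISproblem{}. Concretely, for any graph $G$ on $n$ vertices and any integer $k$, a set $S\subseteq V(G)$ is a vertex cover of $G$ of size at most $k$ if and only if $V(G)\setminus S$ is an independent set of $G$ of size at least $n-k$. This bijection preserves the underlying graph, so in particular it preserves the mock forest modulator $X$.

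Given this, I would define the transformation from \ISMFshort{} to \VCMFshort{} as follows: on input $(G,X,k)$ of \ISMFshort{}, output $(G,X,k')$ with $k' := |V(G)| - k$. The output graph is identical, so $X$ remains a mock forest modulator, and the parameter $|X|$ is unchanged (in particular polynomially bounded in the source parameter). The equivalence of instances follows immediately from the duality above. This is clearly computable in polynomial time, so it constitutes a polynomial-parameter transformation in the sense of the definition preceding Proposition~\ref{ismf:prop:poly-par-trans}.

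Next I would check the side conditions of Proposition~\ref{ismf:prop:poly-par-trans}. The unparameterized \ISproblem{} is NP-hard (this implies the unparameterized version of \ISMFshort{} is NP-hard, since a mock forest modulator $X = V(G)$ is trivially available as part of the input), and \VCproblem{} is in NP, so the unparameterized version of \VCMFshort{} lies in NP as well. Hence the proposition applies in the contrapositive direction: if \VCMFshort{} admitted a polynomial kernel, then so would \ISMFshort{}, contradicting Theorem~\ref{thm:mockforest}. There is no genuine obstacle here; the only care required is to confirm that the modulator and parameter carry over unchanged, and that the NP membership and NP-hardness hypotheses of Proposition~\ref{ismf:prop:poly-par-trans} are satisfied by the two unparameterized counterparts.
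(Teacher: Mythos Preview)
Your proposal is correct and follows essentially the same approach as the paper: the paper states the corollary without proof, relying on the ``immediate correspondence between \VCproblem{} and \ISproblem'' announced at the start of Section~\ref{sec:lowb}, and you have simply made this correspondence explicit via the polynomial-parameter transformation framework already set up for Theorem~\ref{thm:mockforest}.
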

 
 Finally, let us observe that in the proof of Theorem~\ref{thm:mockforest}, the graph $G-X$ is outerplanar. Also every $2$-connected component of this graph is either an edge or a cycle of length $3$. Thus the proof of Theorem~\ref{thm:mockforest} can be used to show that  \ISproblem{} parameterized by the size of a modulator to  an outerplanar graph, a cactus graph  or to  a block graph,  does does not admit a polynomial kernel.

 {\small
{ 
 \bibliographystyle{siam}
%\bibliographystyle{abbrv}
%\bibliography{graphs,kernels,convcNK,complete}
\bibliography{book_pc} %,convcNK,complete}
}}

%\appendix

%\input{appendix}

\end{document}